\newtheorem{theorem}{Theorem}
\newtheorem{proposition}{Proposition}
\newtheorem{corollary}{Corollary}
\newtheorem{lemma}{Lemma}
\theoremstyle{definition}
\newtheorem{definition}{Definition}
\begin{document}

\title{Spontaneous stochasticity and renormalization group in discrete multi-scale dynamics}
\date{\today}
\author{Alexei A. Mailybaev\footnote{Instituto de Matem\'atica Pura e Aplicada -- IMPA, Rio de Janeiro, Brazil. E-mail: alexei@impa.br} \and Artem Raibekas\footnote{Instituto de Matem\'atica e Estat\'istica, UFF, Niter\'oi, Brazil. E-mail: artemr@id.uff.br}}

\maketitle

\begin{abstract}
We introduce a class of multi-scale systems with discrete time, motivated by the problem of inviscid limit in fluid dynamics in the presence of small-scale noise. These systems are infinite-dimensional and defined on a scale-invariant space-time lattice.
We propose a qualitative theory describing the vanishing regularization (inviscid) limit as an attractor of the renormalization group operator acting in the space of flow maps or respective probability kernels. If the attractor is a nontrivial probability kernel, we say that the inviscid limit is spontaneously stochastic: it defines a stochastic (Markov) process solving deterministic equations with deterministic initial and boundary conditions. The results are illustrated with solvable models: symbolic systems leading to digital turbulence and systems of expanding interacting phases.
\end{abstract}

\section{Introduction}

Space-time scale invariance is a fundamental property of many physical models. 
In ideal fluid dynamics, it refers to transformations of the velocity field $\mathbf{u}(\mathbf{r},t)$ as~\cite{frisch1999turbulence}
	\begin{equation}
	\label{eq_inv}
	t,\mathbf{r},\mathbf{u} \mapsto \lambda^{1-h}t,\lambda\mathbf{r},\lambda^h \mathbf{u},
	\end{equation}
where $\lambda > 0$ is a scaling factor and $h \in \mathbb{R}$ is a space-time scaling exponent. A textbook example is the inviscid Burgers equation as a prototype for compressible ideal fluid, while our main motivation comes from the developed turbulence, where symmetries (\ref{eq_inv}) refer to Euler equations for incompressible ideal fluid. 
A common feature of ideal scale-invariant models is that their solutions may be non-unique or not globally defined~\cite{eggers2008role,buckmaster2021convex}. A globally well-posed system is obtained by adding regularizing (e.g., viscous) terms, and then a physically relevant solution for the ideal system is selected in the inviscid limit. 
The important aspect, to which we put special attention in this paper, is the effect of small-scale fluctuations. Earlier studies~\cite{lorenz1969predictability,ruelle1979microscopic,leith1972predictability,eyink1996turbulence,boffetta2001predictability,falkovich2001particles,biferale2018rayleigh} suggest that turbulent dynamics is intrinsically stochastic even when the noise is tiny and limited to small scales, and that the stochastic behavior persists when this noise is removed in the limit of vanishing regularization~\cite{mailybaev2016spontaneously,thalabard2020butterfly}; see also \cite{drivas2020statistical,eyink2020renormalization,mailybaev2021spontaneously} for solvable mathematical examples. Such limiting solutions are called \textit{spontaneously stochastic}: they represent probability distributions on a set of non-unique solutions of deterministic equations with deterministic initial conditions. 

Our present work aims at developing a qualitative theory of the inviscid limit in a class discrete-time scale-invariant models. This theory explains why the inviscid limit can be deterministic or spontaneous stochastic and why it can be universal, i.e., not sensitive to the choice of regularizing terms. According to the symmetry (\ref{eq_inv}), we consider a geometric sequence of spatial scales $\ell_n = \lambda^{-n}$ and corresponding temporal scales (turn-over times) $\tau_n = \ell_n^{1-h}$. At each scale $\ell_n$, the system is characterized by a variable $u_n(t)$. For example, such kind of scales and variables in fluid dynamics result from the Littlewood--Paley decomposition of the  velocity field. For our study we select a specific symmetry with $\lambda = 2$ and $h = 0$. This symmetry facilitates the discrete-time formulation with the spatial and temporal scales 
	\begin{equation}
	\label{eq1}
	\ell_n = \tau_n = 2^{-n}, \quad n = 0,1,2,\ldots.
	\end{equation}
Regularization is introduced by modifying the dynamics below a small ``viscous scale'' $\ell_N$. This regularization is removed in the inviscid limit $\ell_N \to 0$ as $N \to \infty$. 

Our main result is that the inviscid limit is governed by the dynamical system 
	\begin{equation}
	\label{Int_1}
	\psi^{(N+1)} = \mathcal{R}_g[\psi^{(N)}], 
	\end{equation}
where $\psi^{(N)}$ is a turn-over time evolution map for the system regularized at scale $\ell_N$, and 
	\begin{equation}
	\label{Int_1R}
	\mathcal{R}_g[\psi] = \sigma_- \circ \psi \circ \psi \circ \sigma_+ +\xi
	\end{equation}
is the re\-nor\-ma\-li\-za\-ti\-on-group (RG) operator. Here, the last term $\xi$ depends on the ideal model, while the first term $\sigma_- \circ \psi \circ \psi \circ \sigma_+$ with scaling maps $\sigma_\pm$  is similar to the Feigenbaum--Cvitanovi\'c functional relation~\cite{feigenbaum1983universal} (introduced independently by Coullet--Tresser~\cite{coullet1978iterations}). 
The initial map $\psi^{(1)}$ depends on the choice of regularization. In the presence of noise, which is modelled by random fluctuations at regularized scales, we introduce a stochastic version of the RG operator acting in a space of probability (Markov) kernels.

Our theory relates the inviscid limit $N \to \infty$ with attractors of the RG operator. A fixed point attractor, $\mathcal{R}_g^N[\psi^{(1)}] \to \psi^\infty$, yields a unique and universal deterministic dynamics of the ideal system. Analogous limit for the Burgers equation is known as shock solutions; see e.g. \cite{dafermos2005hyperbolic}. When the attractor is a nontrivial probability kernel, the limiting dynamics is spontaneously stochastic: it represents a stochastic (Markov) process solving deterministic equations of ideal system with deterministic initial and boundary conditions.

We present several examples of relatively simple systems, for which our results are verified analytically and numerically: symbolic models and systems with interacting phases. In symbolic systems, the variables ${u}_n(t) \in \mathbb{Z}^2 = \{0,1\}$ take two values interpreted as laminar and turbulent states. We argue that, like in the theory of dynamical systems \cite[\S1.9]{katok1997introduction}, ``in many respects symbolic systems serve as models for smooth ones; it is often easier to see many properties in the symbolic case first and then carry them over to the smooth case''. Indeed, these models demonstrate such properties as a finite-time blowup and, depending on the form of interactions, a deterministic or spontaneously stochastic behavior in the inviscid limit. We refer to their irregular multi-scale dynamics as \textit{digital turbulence}. The second example belongs to a class of solvable spontaneously stochastic models with linear expanding phase interactions. Such models, where the interactions are given by hyperbolic toral automorphisms, e.g. Arnold's cat map, were studied in~\cite{mailybaev2021spontaneously}.

The paper is organized as follows. In Section \ref{sec2} we introduce a general model on a scale-invariant lattice. 
Afterwards, regularized solutions and the corresponding RG operator are defined in Section \ref{sec_DR}. In Section~\ref{section4} we associate solutions obtained in the inviscid limit with the fixed-point attractor of the RG operator. The class of symbolic models is studied in Section~\ref{sec_ex} demonstrating different types of convergence for the RG operator. Motivated by the  lack of convergence in the previous example, we extend our results in Section~\ref{sec8} to the stochastic form of regularization. Section~\ref{sec9} associates spontaneously stochastic solutions with an attractor of the stochastic RG operator. We then present in Section~\ref{sec_sp_ex} two examples of spontaneously stochastic systems: models with expanding interacting phases and symbolic models. Finally, Section~\ref{sec10} summarizes the obtained results and discusses directions for further research.
In the Appendix, we collect some rigorous results for symbolic models and technical derivations concerning non-integer times. 

\section{Model}\label{sec2}

We start with an informal description, which motivates a class of discrete-time systems studied in the paper. 
Let a system be represented by a geometric sequence of scales (\ref{eq1}) and described at every scale and time
by a variable ${u}_n(t)$ belonging to some phase space $X$. 
We consider scaling symmetries (\ref{eq_inv}) with $\lambda = 2^m$ and $h = 0$ written for variables ${u}_n(t)$ as
	\begin{equation}
	\label{eq5}
 	t,{u}_n \mapsto 2^m t, {u}_{n+m}, \quad m \in \mathbb{Z},
 	\end{equation}
where the change of indices reflects the change of scales $\ell_n = 2^m \ell_{n+m}$.
Then, a general scale-invariant equation of motion with local inter-scale interactions can be formulated as
	\begin{equation}
	\label{eqSI1}
	\frac{d{u}_n}{dt} = \frac{F({u}_{n-i},\ldots,{u}_n,\ldots,{u}_{n+i})}{\tau_n},
	\end{equation}
where the function $F$ describes interactions with $i$ nearby scales. 
Scale invariant equations with a discrete time follow, e.g., by using a finite-difference approximation of equation (\ref{eqSI1}) with a time step proportional to $\tau_n$. The simplest version is given by the Euler method with $\Delta t = \tau_n$ as
	\begin{equation}
	\label{eqSI3}
	{u}_n(t) = {u}'_n+F({u}'_{n-i},\ldots,{u}'_n,\ldots,{u}'_{n+i}),
	\end{equation}
where the primes denote the variables evaluated at time $t-\tau_n$. 

\begin{figure}[t]
\centering
\includegraphics[width=0.8\textwidth]{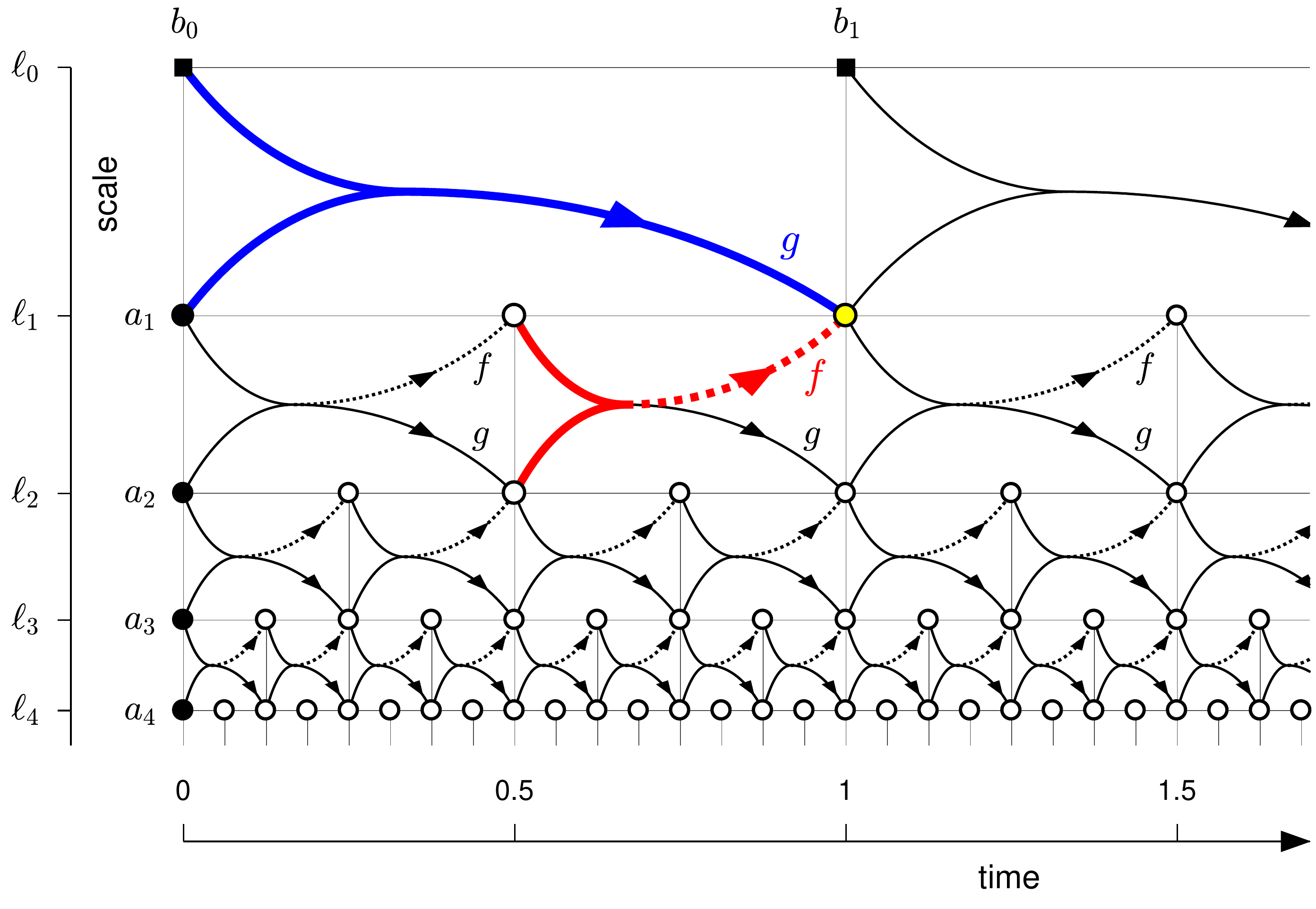}
\caption{Structure of the multi-scale space-time lattice $\mathcal{L}$. Dynamic variables ${u}_n(t)$ at scales $\ell_n$ and times $t > 0$ correspond to empty circles. Full squares correspond to boundary conditions $b_t$ at the scale $\ell_0$ and full circles to initial conditions $a_n$ at $t = 0$. Arrows indicate inter-scale interactions: interaction with a smaller scale (dotted arrow) is governed by the function $f$, and interaction with a larger scale (solid arrow) is governed by the function $g$. The bold blue and red lines illustrate the governing relation (\ref{eq4}) at $(n,t) = (1,1)$.}
\label{fig1}
\end{figure}

We now use equation (\ref{eqSI3}) as a motivation and introduce a convenient class of ideal scale-invariant models. With different time steps $\tau_n$ at different scales, the discrete space-time becomes the multi-scale lattice 
	\begin{equation}
	\label{eq2}
	\mathcal{L} = \{(n,t): t = m\tau_n,\ n,m\in\mathbb{N}\},
	\end{equation}
shown in Fig.~\ref{fig1}. At each point of the lattice $(n,t) \in \mathcal{L}$ we consider a variable ${u}_n(t) \in X$, where $X$ is a complete separable metric space with an additive group operation.  We restrict inter-scale interactions to nearest neightbors with the purpose of facilitating the further analysis. Specifically, we consider interactions described by continuous functions $f: X^2 \mapsto X$ and $g: X^2 \mapsto X$ within each cell as shown in Fig.~\ref{fig1}. Thus, the governing equations of our system are formulated as
	\begin{equation}
	\label{eq4}
	{u}_n(t) = \left\{ 
	\begin{array}{ll}
	f\big({u}'_n,{u}'_{n+1} \big),
	& t/\tau_n \textrm{ is odd};
	\\[5pt]
	f\big({u}'_n,{u}'_{n+1} \big)+g\big({u}''_{n-1},{u}''_{n} \big),
	& t/\tau_n \textrm{ is even};
	\end{array}
	\right.
	\end{equation}
where a single primes denotes the value at time $t-\tau_n$ and a double prime at $t-2\tau_{n}$.
These equations have the scaling symmetry (\ref{eq5}) and express any variable ${u}_n(t)$ in terms of variables at earlier times and adjacent scales.

As shown in Fig.~\ref{fig1}, we introduce the initial conditions at time $t = 0$ as
	\begin{equation}
	\label{eq4c}
	{u}_n(0) = a_n, \quad n \in \mathbb{N} = \{1,2,3,\ldots\},
	\end{equation}
and the large-scale boundary (forcing) conditions associated with $\ell_0 = 1$ as
	\begin{equation}
	\label{eq4b}
	{u}_0(t) = b_t, \quad t \in \mathbb{N}_0 = \{0,1,2,\ldots\}.
	\end{equation}

Our goal is to study the initial value problem given by equations (\ref{eq4}) at all points of the lattice $\mathcal{L}$ with initial conditions (\ref{eq4c}) and boundary conditions (\ref{eq4b}). We refer to this problem as the ideal system, unlike a regularized system introduced later. By analogy with partial differential equations, we refer to solutions of the ideal system as \textit{weak solutions}, since no additional condition on their regularity (limitations on variables at small scales) is imposed. Existence and uniqueness of weak solutions is a nontrivial problem, as shown in the example below. 

We remark that the specific form of equation (\ref{eq4}) is chosen for convenience, facilitating the formulation of renormalization group theory. 
Our formulation is flexible in terms of a configuration space $X$, including both simple symbolic models considered below and physically relevant models obtained with a scale-by-scale separation of degrees of freedom~\cite{frisch1999turbulence}; see also the discussion in Section~\ref{sec10}.

\subsection{Symbolic models} \label{sec_symb_ex}
Let us consider symbolic models with $X =  \mathbb{Z}_2 = \{0,1\}$. We introduce two specific models, which are used later for demonstrating deterministic and spontaneously stochastic behaviors. One can interpret the values ${u}_n(t) = 0$ and $1$, respectively, as a ``laminar'' and ``turbulent'' state at scale $\ell_n$ and time $t$. The first model is defined by the functions 
	\begin{equation}
	\label{eq5fg}
	\textrm{A}:\ f(0,0) = f(0,1) = f(1,1) = g(0,0) = 0,\quad f(1,0) = g(0,1) = g(1,0) = g(1,1) = 1.
 	\end{equation}
Figure~\ref{fig2}(a) shows a solution at times $t \le 1$ for specific initial and boundary conditions. Non-zero components of this solution are restricted to large scales at $t = 0$, but propagate to infinitely small scales $\ell_n \to 0$ (large $n$) as $t \to 1$. The second model corresponds to 
	\begin{equation}
	\label{eq13}
	\textrm{B}:\ f(0,0) = f(0,1) = f(1,0) = g(0,0) = g(1,1) = 0, \quad
	f(1,1) = g(0,1) = g(1,0) = 1.
 	\end{equation}
Figure~\ref{fig2}(b) shows a solution for specific initial and boundary conditions with a similar behavior: non-zero components reach infinitely small scales as $t \to 0.5$.

\begin{figure}[t]
\centering
\includegraphics[width=1\textwidth]{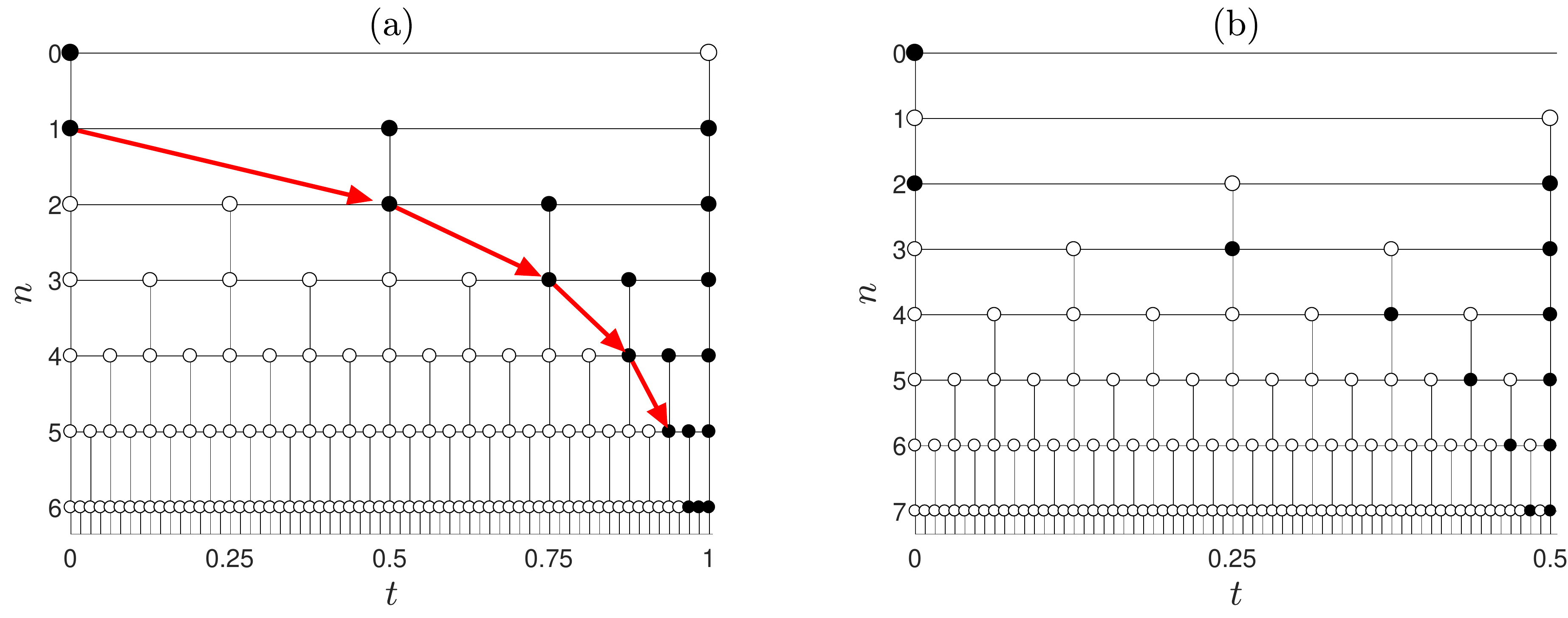}
\caption{Solutions of symbolic multi-scale models with a finite-time blowup. (a) Model A with initial conditions $(a_1,a_2,\ldots) = (1,0,0,\ldots)$ and boundary conditions $(b_0,b_1) = (1,0)$. (b) Model B with initial conditions $(0,1,0,0,\ldots)$ and boundary condition $b_0 = 1$. Black circles correspond to ${u}_n(t) = 1$ and empty circles to ${u}_n(t) = 0$. The arrows indicate the development of blowup for the strong solution in figure (a).}
\label{fig2}
\end{figure}

The presented solutions have an interesting interpretation in the context of strong and weak solutions of partial differential equations, in particular, the Euler equations in fluid dynamics \cite{buckmaster2021convex}. We elaborate this interpretation in more detail in Appendix \ref{sec3} and summarize in the rest of this section. Solutions ${u}_n(t)$ of the ideal system (\ref{eq4})--(\ref{eq4b}) may be regarded as \textit{weak solutions}, since we pose no limitation on nonzero components ${u}_n(t) = 1$ at small scales $\ell_n \to 0$. This is essential because small scales evolve with turnover times $\tau_n = \ell_n \to 0$, i.e., very fast.  As a consequence, weak solutions exist globally in time but they are generally nonunique; see Proposition~\ref{th2} in the Appendix. We can define \textit{strong solutions} by an extra condition that the variables ${u}_n(t)$ vanish for sufficiently small scales. Given initial and boundary conditions, the strong solution exists and is unique locally in time; see Definition~\ref{def1} and Proposition~\ref{th1}. 

However, for strong solutions, nonzero components can propagate to arbitrarily small scales in finite time, as $t \to T$. We call this situation a finite-time singularity or \textit{blowup}. This is shown in Fig.~\ref{fig2} for both models, where the blowup time is (a) $T = 1$ and (b) $T = 0.5$. The blowup demonstrated in these very simple models follows a self-similar scenario, which is also typical for partial differential equations and turbulence models; see e.g.~\cite{dombre1998intermittency,eggers2008role,mailybaev2012renormalization}. Anticipating the phenomenon of spontaneous stochasticity, we prove another interesting statement in Proposition~\ref{prop2}: weak solutions in the model B are unique until the blowup time, but non-unique at larger times $t > T$.

\section{Regularized solutions} \label{sec_DR}

Since solutions of the ideal system (\ref{eq4})--(\ref{eq4b}) are generally nonunique, we now introduce a physically motivated selection procedure based on regularization. We start with the simplest form of the regularization: we choose a scale number $N \in \mathbb{N}$ and assume that there is no evolution at scales smaller than $\ell_N$.

\begin{definition} \label{def_det}
Given a regularization scale number $N$, we introduce the \textit{regularized solution} ${u}_n^{(N)}(t)$ denoted by the superscript $(N)$. It is defined by setting
	\begin{equation}
	\label{eq14}
	{u}_n^{(N)}(t) = 0, \quad n > N, \quad t \ge 0,
 	\end{equation}
and determining the remaining components ${u}_n^{(N)}(t)$, $n \le N$, by equations (\ref{eq4}) with initial conditions (\ref{eq4c}) for $n = 1,\ldots,N$ and boundary conditions (\ref{eq4b}). 
\end{definition}

Let us introduce the space of sequences 
	\begin{equation}
	\label{eq15}
	{X^{\mathbb{N}}} = \left\{{{u}} = ({u}_1,{u}_2,\ldots): {u}_n \in X, n \in \mathbb{N}\right\},
 	\end{equation}
considered as the infinite product space with the product topology. It is a complete separable metric space; see e.g. \cite{neveu1965mathematical}. We denote 
	\begin{equation}
	\label{eq_nt_a}
	a = (a_1,a_2,\ldots) \in X^{\mathbb{N}},\quad 
	{{u}}^{(N)}(t) = ({u}_1^{(N)}(t),{u}_2^{(N)}(t),\ldots) \in X^{\mathbb{N}},\quad
	t/\tau_1 \in \mathbb{N}_0,
 	\end{equation}
for integer and half-integer times $t$. 

It is easy to see that the regularized solutions are unique for any $N$: each variable ${u}_n^{(N)}(t)$ is determined by equations (\ref{eq4}) as a function of initial and boundary conditions after a finite number of iterations. 
Observe that any subsequence $N_i \to \infty$, such that the variables $u_n^{(N_i)}(t)$ converge at all points of the lattice $(n,t) \in \mathcal{L}$, defines a solution of the ideal system. However, this limit may not be  unique.
Notice the regularization scale $\ell_N$ is analogous to the Kolmogorov dissipative scale in fluid dynamics, where the flow is suppressed by viscosity at scales smaller than $\ell_N$~\cite{frisch1999turbulence}. By this analogy, we call $N \to \infty$ the \textit{inviscid limit}. 

We denote by $C({X^{\mathbb{N}}})$ a space of continuous maps from ${X^{\mathbb{N}}}$ to itself. 
The regularized solution at time $t = 1$ depends on initial conditions $a$ and the boundary condition $b_0$;  see Fig.~\ref{fig1}. Let us introduce the unit-time flow map in $C(X^{\mathbb{N}})$ as
	\begin{equation}
	\label{eq19flowAt0}
	\phi^{(N)}_{b_0} : a \mapsto {{u}}^{(N)}(1).
 	\end{equation}
Since our system is invariant with respect to the unit-time translation, one finds solutions at any integer time as
	\begin{equation}
	\label{eq19flowAtB}
	\phi^{(N)}_{b_{t-1}} \circ \cdots \circ \phi^{(N)}_{b_0}: 
	a \mapsto {u}^{(N)}(t), \quad t \in \mathbb{N}.
 	\end{equation}
Let us denote by $\psi^{(N)} \in C({X^{\mathbb{N}}})$ the flow map at half-time $\tau_1 = 1/2$:
	\begin{equation}
	\label{eq19flowA}
	\psi^{(N)} : a \mapsto {{u}}^{(N)}(\tau_1).
 	\end{equation}
This map does not depend on boundary conditions as one can see from Fig.~\ref{fig1}. If one neglects the interaction between scales $\ell_0$ and $\ell_1$, the unit-time flow map would be the composition $\psi^{(N)} \circ \psi^{(N)}$ as shown in Fig.~\ref{fig4}. One can see using equation (\ref{eq4}) that the omitted interaction affects only the variable ${u}^{(N)}_1(1)$ by an additive term $g(b_0,a_1)$. Hence, we express
	\begin{equation}
	\label{eq19flowAt}
	\phi^{(N)}_{b_0} = \psi^{(N)} \circ \psi^{(N)}+\beta_{b_0}
 	\end{equation}
with the map
	\begin{equation}
	\label{eq19flowX}
	\beta_{b_0}: a \mapsto \left(g(b_0,a_1),0,0,\ldots\right).
 	\end{equation}

\begin{figure}[t]
\centering
\includegraphics[width=0.63\textwidth]{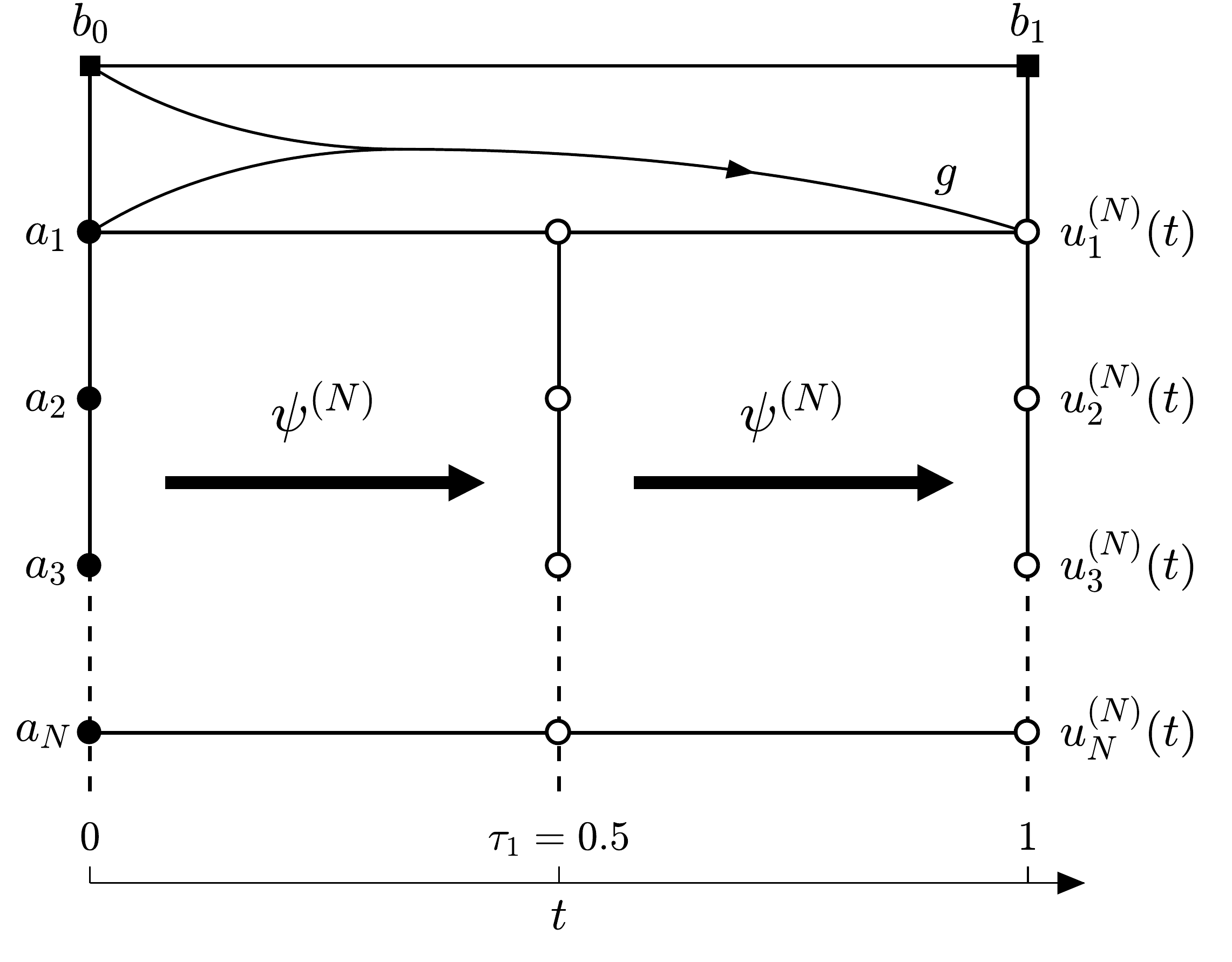}
\caption{Structure of the regularized dynamics determining the unit-time map $\phi_{b_0}^{(N)}: a \mapsto u^{(N)}(1)$ in terms of the half-time map $\psi^{(N)}: a \mapsto u^{(N)}(\tau_1)$.}
\label{fig4}
\end{figure}

We derive below the iteration relations for $\psi^{(N)}$ formulated using the shift (scaling) maps
	\begin{equation}
	\label{eq30Rs}
	\sigma_+: a \mapsto (a_2,a_3,\ldots), \quad
	\sigma_-: a \mapsto (0,a_1,a_2,\ldots),
 	\end{equation}
which increase or decrease all scales of a given state, and the coupling map
	\begin{equation}
	\label{eq30C}
	\xi: a \mapsto \left(f(a_1,a_2),\, g(a_1,a_2),\,0,0,\ldots\right).
 	\end{equation}

\begin{proposition}
\label{prop3} For any $N \in \mathbb{N}$, we have 
	\begin{equation}
	\label{eq30P}
	\psi^{(N+1)} = \sigma_- \circ \psi^{(N)} \circ \psi^{(N)} \circ \sigma_+ +\xi.
 	\end{equation}
\end{proposition}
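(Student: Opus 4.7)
The plan is to derive (\ref{eq30P}) by decomposing the $(N+1)$-regularized half-time evolution into a trivial scale-$1$ step and, at scales $n \ge 2$, a full unit-time evolution of an appropriately rescaled $N$-regularized subsystem; the already-established decomposition (\ref{eq19flowAt}) of the unit-time flow then closes the identity.

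Concretely, first I would compute the scale-$1$ output directly from (\ref{eq4}): at $(n, t) = (1, \tau_1)$ the ratio $t/\tau_n = 1$ is odd, so $u_1^{(N+1)}(\tau_1) = f(a_1, a_2)$. For the remaining components I would introduce the rescaled variables $\tilde u_m(s) := u_{m+1}^{(N+1)}(s/2)$ for $m \ge 0$ and $s \in [0, 1]$. Checking (\ref{eq4}) for $\tilde u$ at each scale $m \ge 1$ and each parity of $s/\tau_m$, the scaling symmetry (\ref{eq5}) shows that $\tilde u$ satisfies the same model equations with the same $f$ and $g$; the cutoff $u_n^{(N+1)} = 0$ for $n > N+1$ translates into $\tilde u_m = 0$ for $m > N$; the initial condition on dynamical components is $(\tilde u_1(0), \tilde u_2(0), \ldots) = \sigma_+(a)$; and the scale-$0$ boundary is $\tilde u_0(s) = u_1^{(N+1)}(s/2)$ with $\tilde u_0(0) = a_1$. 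Since the boundary enters (\ref{eq4}) only at even $s/\tau_1$, the only boundary value used on $s \in [0, 1]$ is $\tilde u_0(0) = a_1$, and hence $\tilde u(1) = \phi^{(N)}_{a_1}(\sigma_+(a))$. Applying (\ref{eq19flowAt}) gives $\tilde u(1) = \psi^{(N)} \circ \psi^{(N)} \circ \sigma_+(a) + \beta_{a_1}(\sigma_+(a))$, and since $\tilde u_m(1) = u_{m+1}^{(N+1)}(\tau_1)$, this determines the scale-$n \ge 2$ components of $\psi^{(N+1)}(a)$. Assembling with the scale-$1$ value, the index shift by one is exactly the action of $\sigma_-$, while the two contributions $f(a_1, a_2)$ at $n = 1$ and $g(a_1, a_2)$ at $n = 2$ (the only nonzero entry of $\beta_{a_1}(\sigma_+(a))$) match $\xi(a)$ precisely, yielding (\ref{eq30P}).

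I expect the genuinely delicate step to be the scale-invariance verification for $\tilde u$: one must separately check both parities of $s/\tau_m$ and confirm that at $m = 1$ the $g$-coupling $g(\tilde u_0'', \tilde u_1'')$, computed via $\tilde u_0(s) = u_1^{(N+1)}(s/2)$, reproduces exactly the $g(u_1'', u_2'')$ term appearing in (\ref{eq4}) for $u_2^{(N+1)}$ at the corresponding time. The accompanying bookkeeping — that only the initial boundary value enters the unit-time flow of $\tilde u$ — follows from the fact that $g$-couplings to the boundary in (\ref{eq4}) occur only at even $s/\tau_1$, so within $s \in (0, 1]$ the only relevant instance is $s = 1$, whose equation evaluates the boundary at $s - 2\tau_1 = 0$.
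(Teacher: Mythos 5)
Your proposal is correct and follows essentially the same route as the paper's proof: rescale by a factor of two in scale and time via $\tilde u_m(s) = u_{m+1}^{(N+1)}(s/2)$, recognize the result as the $N$-regularized system with initial data $\sigma_+(a)$ and boundary value $a_1$, invoke the decomposition (\ref{eq19flowAt}) of the unit-time flow, and reassemble using $u_1^{(N+1)}(\tau_1) = f(a_1,a_2)$ and the fact that the shifted $\beta_{a_1}(\sigma_+(a))$ contribution together with the scale-$1$ value reproduces $\xi(a)$. The component bookkeeping in your final assembly checks out exactly as in the paper.
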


\begin{proof} 
Consider a regularized solution ${u}_n^{(N+1)}(t)$. Following scaling symmetry (\ref{eq5}) with $m = 1$, we introduce new variables by changing both time and scale with the factor of 2 as  
	\begin{equation}
	\label{eq5N}
	\tilde{{u}}^{(N)}_n(t) = {u}_{n+1}^{(N+1)}(t/2).
 	\end{equation}
Notice that we also changed the regularization scale.
One can verify that variables (\ref{eq5N}) satisfy the initial conditions and the first boundary condition in the form
	\begin{equation}
	\label{eq5Nic_b}
	\tilde{{u}}^{(N)}(0) = \tilde{a} = \sigma_+(a),\quad
	\tilde{{u}}^{(N)}_0(0) = \tilde{b}_0 = a_1, 
 	\end{equation}
as well as regularization relations (\ref{eq14}) and equations of ideal system (\ref{eq4}). Thus, relation (\ref{eq5N}) describes the scaling symmetry 
	\begin{equation}
	\label{eq5reg}
 	t,{u}_n^{(N)} \mapsto 2t, {u}_{n+1}^{(N+1)}
 	\end{equation}
for the regularized system. Relations (\ref{eq19flowAt0}) and (\ref{eq19flowAt}) for solution (\ref{eq5N}) yield
	\begin{equation}
	\label{eq19flowPrA0}
	\phi^{({N})}_{\tilde{b}_0} : \tilde{a} \mapsto \tilde{{u}}^{({N})}(1),
	\quad
	\phi^{({N})}_{\tilde{b}_0} = \psi^{({N})} \circ \psi^{({N})}+\beta_{\tilde{b}_0}.
 	\end{equation}
In terms of original variables, these expressions take the form
	\begin{equation}
	\label{eq19flowPrA}
	\phi^{(N)}_{a_1} : (a_2,a_3,\ldots) \mapsto \left({u}_2^{(N+1)}(\tau_1),{u}_3^{(N+1)}(\tau_1),\ldots\right),
	\quad
	\phi^{(N)}_{a_1} = \psi^{(N)} \circ \psi^{(N)}+\beta_{a_1}.
 	\end{equation}
Expressing $\psi^{(N)} \circ \psi^{(N)}$ from (\ref{eq19flowPrA}) in the right-hand side of relation (\ref{eq30P}), we write
	\begin{equation}
	\label{eq19flowPrR}
	\sigma_- \circ \psi^{(N)} \circ \psi^{(N)} \circ \sigma_+ +\xi
	= \sigma_- \circ \left( \phi^{(N)}_{a_1}-\beta_{a_1}\right) \circ \sigma_+ +\xi.
 	\end{equation}
Using the explicit form of maps (\ref{eq19flowX})--(\ref{eq30C}), and the first relation in (\ref{eq19flowPrA}) one can see that the map (\ref{eq19flowPrR}) acts as
	\begin{equation}
	\label{eq19flowPrS}
	a \mapsto \left(f(a_1,a_2),{u}_2^{(N+1)}(\tau_1),{u}_3^{(N+1)}(\tau_1),\ldots\right).
 	\end{equation}
Since ${u}_1^{(N+1)}(\tau_1) = f(a_1,a_2)$ is given by equations (\ref{eq4}) and (\ref{eq4c}), and recalling definition (\ref{eq19flowA}), we find that relation (\ref{eq19flowPrS}) represents the map $\psi^{(N+1)}$.
\end{proof}

\section{Inviscid limit as RG dynamics}\label{section4}

Now let us describe the inviscid limit in terms of the \textit{renormalization group (RG) operator}.

\begin{definition}
The RG operator is the map from $C({X^{\mathbb{N}}})$ to itself defined as
	\begin{equation}
	\label{eq30R}
	\mathcal{R}_g: \psi \mapsto \sigma_- \circ \psi \circ \psi \circ \sigma_+ +\xi,\quad \psi \in C({X^{\mathbb{N}}}).
 	\end{equation}
\end{definition}

Proposition~\ref{prop3} yields the RG equation
	\begin{equation}
	\label{eq30}
	\psi^{(N+1)} = \mathcal{R}_g\left[\psi^{(N)}\right],
 	\end{equation}
which defines iteratively all regularized maps $\psi^{(N)}$ given the initial map for $N = 1$. The latter is found explicitly as
	\begin{equation}
	\label{eq19b}
	\psi^{(1)}: (a_1,a_2,\ldots) \mapsto \left(f(a_1,0),0,0,\ldots\right).
 	\end{equation}
We conclude that the inviscid limit $N \to \infty$ is governed by the dynamical system (\ref{eq30}) defined by the RG operator (\ref{eq30R}), in which the regularization scale number $N$ plays the role of  ``time''. The RG operator depends only on the ideal system via the coupling map $\xi$. One can recognise analogy of our RG operator (\ref{eq30R}) with the Feigenbaum--Cvitanovi\'c functional equation in the theory of dynamical systems~\cite{feigenbaum1983universal}: the term $\sigma_- \circ \psi \circ \psi \circ \sigma_+$ represents the iterated map rescaled with the factor $1/2$, similar to the composition $\alpha g \circ g(x/\alpha)$ with $\alpha \approx 2.5029$. 

In order to understand the dynamics of the RG operator, we use the following definition of convergence for maps in $C({X^{\mathbb{N}}})$.

\begin{definition}\label{def_lim}
We say that $\psi^{(N)} \to \psi^\infty$ is a limiting map for a sequence $\psi^{(N)} \in C({X^{\mathbb{N}}})$ if 
	\begin{equation}
	\label{eq31a}
	\psi^\infty({{a}}) = \lim_{N \to \infty} \psi^{(N)}({{a}}_N)
 	\end{equation}
for any converging sequence ${{a}}_N \to {{a}}$ in ${X^{\mathbb{N}}}$.
\end{definition}

The type of convergence in Definition~\ref{def_lim} is weaker than the strong convergence with respect to the supremum metric, but it is stronger than the pointwise convergence. One can verify that convergence (\ref{eq31a}) implies $\psi^\infty \in C({X^{\mathbb{N}}})$. The importance of condition (\ref{eq31a}) is that it is compatible with the composition in the RG operator (\ref{eq30R}): the limit $\psi^{(N)} \to \psi^\infty$ yields $\psi^{(N)} \circ \psi^{(N)} \to \psi^\infty  \circ \psi^\infty$. 

Let us assume that the limit $\psi^{(N)} \to \psi^\infty$ exists. Taking inviscid limit in both sides of (\ref{eq30}), one can see that the map $\psi^\infty$ is a fixed point of the RG operator:
	\begin{equation}
	\label{eq31}
	\psi^\infty = \mathcal{R}_g[\psi^\infty].
 	\end{equation}
This fixed point characterizes the inviscid limit of regularized solutions as follows.

\begin{theorem}[Inviscid limit]
\label{th3}
Let $\psi^\infty \in C({X^{\mathbb{N}}})$ be a fixed-point of the RG operator providing the limit $\psi^{(N)} \to \psi^\infty$. Then, for any initial conditions ${a} = (a_1,a_2,\ldots)$ and boundary conditions $(b_0,b_1,\ldots)$, the regularized solutions converge for any $(n,t) \in \mathcal{L}$ to
	\begin{equation}
	\label{eq32}
	{u}^{\infty}_n(t) = \lim_{N \to \infty} {u}^{(N)}_n(t),
 	\end{equation}
which is a solution of the ideal system (\ref{eq4})--(\ref{eq4b}). At integer times, this solution has the form
	\begin{equation}
	\label{eqKn4}
	{{u}}^{\infty}(t) = \phi^\infty_{b_{t-1}} \circ \cdots \circ \phi^\infty_{b_1} \circ \phi^\infty_{b_0}({a}),\quad
	\phi^\infty_b = \psi^{\infty} \circ \psi^{\infty}+\beta_b, \quad
	t \in \mathbb{N}.
 	\end{equation}
At non-integer times, the solution is given by Proposition~\ref{prop8} in the Appendix.
\end{theorem}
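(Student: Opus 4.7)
The plan is to derive Theorem~\ref{th3} from the assumed convergence $\psi^{(N)} \to \psi^\infty$ by exploiting the composition-compatibility of the convergence type in Definition~\ref{def_lim} (highlighted immediately after the definition) together with the continuity of $f$ and $g$, proving first the integer-time formula (\ref{eqKn4}) and then extending to all remaining lattice points.

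First I would lift $\psi^{(N)} \to \psi^\infty$ to the unit-time flow. For any sequence $a_N \to a$ in $X^{\mathbb{N}}$, setting $b_N := \psi^{(N)}(a_N)$ gives $b_N \to \psi^\infty(a)$ by Definition~\ref{def_lim}, and a second application gives $\psi^{(N)}(b_N) \to \psi^\infty \circ \psi^\infty(a)$; hence $\psi^{(N)} \circ \psi^{(N)} \to \psi^\infty \circ \psi^\infty$. The map $\beta_b$ from (\ref{eq19flowX}) is continuous and independent of $N$, and the group operation on $X^{\mathbb{N}}$ is continuous coordinate-wise, so
\[
\phi^{(N)}_b \;=\; \psi^{(N)} \circ \psi^{(N)} + \beta_b \;\longrightarrow\; \psi^\infty \circ \psi^\infty + \beta_b \;=\; \phi^\infty_b
\]
in the sense of Definition~\ref{def_lim}. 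The same composition-compatibility argument, iterated by induction on $t \in \mathbb{N}$ using $u^{(N)}(t) = \phi^{(N)}_{b_{t-1}}\bigl(u^{(N)}(t-1)\bigr)$ and the base case $u^{(N)}(0) = a$, produces
\[
u^{(N)}(t) \;=\; \phi^{(N)}_{b_{t-1}} \circ \cdots \circ \phi^{(N)}_{b_0}(a) \;\longrightarrow\; \phi^\infty_{b_{t-1}} \circ \cdots \circ \phi^\infty_{b_0}(a)
\]
in $X^{\mathbb{N}}$, which is precisely (\ref{eqKn4}). Since the product topology is coordinate-wise, this yields $u^{(N)}_n(t) \to u^\infty_n(t)$ for every $n$ at each integer $t$.

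Next, for the half-integer time $t = m + \tau_1$ with $m \in \mathbb{N}_0$, I would apply Definition~\ref{def_lim} directly to the convergent sequence $a_N := u^{(N)}(m) \to u^\infty(m)$, obtaining $u^{(N)}(t) = \psi^{(N)}(a_N) \to \psi^\infty\bigl(u^\infty(m)\bigr)$, which defines $u^\infty(t)$; pointwise convergence in $n$ follows once more from the product topology. To reach the remaining lattice points at dyadic times $t = j/2^k$, $k \geq 2$, I would propagate convergence forward in time through the recursion (\ref{eq4}): for each such $(n,t)$ the variable $u^{(N)}_n(t)$ is a continuous function of $u^{(N)}_n(t-\tau_n)$, $u^{(N)}_{n+1}(t-\tau_n)$ and (when $t/\tau_n$ is even) $u^{(N)}_{n-1}(t-2\tau_n)$, all evaluated at strictly earlier lattice times; convergence is then preserved by continuity of $f$ and $g$. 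The explicit representation of $u^\infty_n(t)$ produced by this unrolling is the content of Proposition~\ref{prop8} in the Appendix.

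Finally, to verify that $u^\infty$ solves the ideal system (\ref{eq4})--(\ref{eq4b}), I would fix any $(n,t) \in \mathcal{L}$ and take $N > n$: the regularization (\ref{eq14}) then does not touch any variable appearing in equation (\ref{eq4}) at $(n,t)$, so $u^{(N)}_n(t)$ satisfies (\ref{eq4}) verbatim, and passing to the limit $N \to \infty$ using continuity of $f$, $g$ and the coordinate-wise convergence already established shows the same for $u^\infty_n(t)$. The initial and boundary conditions (\ref{eq4c})--(\ref{eq4b}) are inherited trivially because they do not depend on $N$. The main obstacle I expect is the step at non-integer lattice times: equation (\ref{eq4}) at $(n,t)$ refers to $u^{(N)}_{n+1}$ at a time which, for $t = j/2^k$ with $k \geq 2$, is again a non-half-integer time with the same or finer dyadic denominator, so a naive induction on $k$ does not close and one genuinely needs the finiteness of the per-$(n,t)$ recursion tree together with the uniformity of the integer/half-integer-time limit — which is precisely what Proposition~\ref{prop8} packages.
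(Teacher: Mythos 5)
Your proposal is correct and follows essentially the same route as the paper: pass to the limit in the flow-map representation $u^{(N)}(t)=\phi^{(N)}_{b_{t-1}}\circ\cdots\circ\phi^{(N)}_{b_0}(a)$ using the composition-compatibility of the convergence in Definition~\ref{def_lim}, defer fractional times to Proposition~\ref{prop8}, and obtain the weak-solution property from continuity of $f,g$ together with the fact that each individual equation in (\ref{eq4}) is satisfied by $u^{(N)}$ for all sufficiently large $N$. One small correction to your closing remark: what closes the argument at times involving $\tau_n$ with $n\ge 2$ is not finiteness of the per-$(n,t)$ recursion tree (which genuinely fails for the ideal system --- that is exactly the source of non-uniqueness of weak solutions) but the scaling identity $u^{(N)}(\tau_n)=\psi^{(N-n+1)}\circ\sigma_+^{\,n-1}(a)$ underlying Proposition~\ref{prop8}, which rewrites the state at fractional times through half-time maps at shifted regularization indices, all of which converge to the same $\psi^\infty$.
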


\begin{proof}
Expressions (\ref{eq19flowAt0}) and (\ref{eq19flowAt}) yield
	\begin{equation}
	\label{eqPr_s1}
	{{u}}^{(N)}(1) = \psi^{(N)} \circ \psi^{(N)}({a})+\beta_{b_0}({a}).
 	\end{equation}
Taking the limit $N \to \infty$ with $\psi^{(N)} \to \psi^\infty$, one derives 
	\begin{equation}
	\label{eqPr_s1lim}
	{{u}}^\infty(1) = 
	 \lim_{N \to \infty} {u}^{(N)}(1) 
	 = \psi^\infty \circ \psi^\infty({a})+\beta_{b_0}({a}),
 	\end{equation}
which coincides with (\ref{eq32}) and (\ref{eqKn4}) for $t = 1$. For any $t \in \mathbb{N}$, the analogous derivation follows from (\ref{eq19flowAtB}). Fractional times are analysed in Proposition~\ref{prop8} in the Appendix.

By Definition~\ref{def_det}, each particular equation of the ideal system in (\ref{eq4})--(\ref{eq4b}) is satisfied by the regularized solution ${u}_n^{(N)}(t)$ for sufficiently large $N$. Recall that the functions $f$ and $g$ in equation (\ref{eq4}) are continuous. It follows that all equations of the ideal system are satisfied in the inviscid limit (\ref{eq32}). 
\end{proof}
  
One of central questions for the inviscid limit is its dependence on the chosen regularization. Let us generalize the concept of regularization by replacing the simple cut-off rule (\ref{eq14}) in Definition~\ref{def_det} with an arbitrary uniquely defined dynamics at regularized scales $n > N$. We will see that Theorem~\ref{th3} can be used for proving the universality of inviscid limit, i.e., its independence of regularization. Considering the regularized variables at scales $n >  N$, we define their evolution over one turn-over time $t \mapsto t+\tau_N$ as
	\begin{equation}
	\label{eq14_gen0}
	\psi_{\mathrm{reg}}: \left({u}_{N+1}^{(N)}(t),{u}_{N+2}^{(N)}(t),\ldots\right) \mapsto \left({u}_{N+1}^{(N)}(t+\tau_N),{u}_{N+2}^{(N)}(t+\tau_N),\ldots\right),
 	\end{equation}
where $\psi_{\mathrm{reg}} \in C({X^{\mathbb{N}}})$ is a given regularization map. Relations (\ref{eq14_gen0}) corresponding to $n > N$ together with equations (\ref{eq4}) for $n \le N$ define the unique regularized solution $u_n^{(N)}(t)$ at times $t/\tau_N \in \mathbb{N}$. In particular, the map (\ref{eq19flowA}) for $N = 1$ from (\ref{eq19b}) is replaced by
	\begin{equation}
	\label{eq19b_gen}
	\psi^{(1)}: (a_1,a_2,\ldots) \mapsto \left(f(a_1,a_2),0,0,\ldots\right)+\sigma_-\circ \psi_{\mathrm{reg}} \circ \sigma_+(a).
 	\end{equation}
One can verify that the RG relations (\ref{eq30}) and (\ref{eq19flowAt}), as well as the RG operator (\ref{eq30R}) remain the same. Hence, the RG theory of Theorem~\ref{th3} extends to the new regularized solutions. 
We conclude that different regularizations affect only the initial map of the RG dynamics, while the RG operator itself is determined by couplings of the ideal system.

Theorem~\ref{th3} suggests that properties of the regularized solutions in the inviscid limit $N \to \infty$ are governed by the asymptotic dynamics (attractors) of the RG operator. 
In particular, let us assume that $\psi^{\infty}$ is a fixed-point attractor of the RG operator with the basin of attraction $B(\psi^\infty) \subset C({X^{\mathbb{N}}})$. Then we have the following universality property following from Theorem~\ref{th3}: the inviscid-limit solutions are independent of regularization for any $\psi^{(1)} \in B(\psi^\infty)$.

\section{Deterministic inviscid limit for a symbolic model}\label{sec_ex}

Let us show that symbolic models A and B introduced in Section~\ref{sec_symb_ex} exhibit different behaviors in the inviscid limit. As the next proposition shows, the model A with functions (\ref{eq5fg}) has a universal inviscid-limit solution given by a global attractor of the RG operator: the convergence $\psi^{(N)} \to \psi^\infty$ holds for any initial map $\psi^{(1)} \in C({X^{\mathbb{N}}})$.

\begin{proposition}\label{prog_fp}
The RG operator of the symbolic model A has the global fixed-point attractor 
	\begin{equation}
	\label{eq_exf_X1}
	\psi^{\infty} = \xi+\sum_{n = 3}^\infty{\zeta_n},\quad
	\zeta_n: {a} \mapsto \left\{\begin{array}{ll} 
	\mathbf{0}, & a_2 = \cdots = a_n = 0;\\[3pt]
	\mathbf{e}_n, & \textrm{otherwise};
	\end{array}\right.
 	\end{equation}
where $\mathbf{0} = (0,0,\ldots)$ and $\mathbf{e}_n \in X^\mathbb{N}$ is the  sequence with the unit $n$th component. 
\end{proposition}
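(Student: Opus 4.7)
My plan is to split the proof into two stages: first verify that $\psi^\infty$ is a fixed point of $\mathcal{R}_g$, and then establish the attractor property by showing $\psi^{(N)} \to \psi^\infty$ in the sense of Definition~\ref{def_lim} for every initial $\psi^{(1)} \in C(X^\mathbb{N})$.

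For the fixed-point identity, I will evaluate $\sigma_-\circ\psi^\infty\circ\psi^\infty\circ\sigma_+(a)+\xi(a)$ coordinate-wise. Writing $b = \sigma_+(a) = (a_2,a_3,\ldots)$, the first two coordinates of $\psi^\infty(b)$ are $f(b_1,b_2)$ and $g(b_1,b_2)$, while $\psi^\infty(b)_n = \mathrm{OR}(b_2,\ldots,b_n)$ for $n \ge 3$. A short case analysis over $(b_1,b_2) \in \{0,1\}^2$ using (\ref{eq5fg}) gives the two key identities $f(f(b_1,b_2),g(b_1,b_2)) = 0$ and $g(f(b_1,b_2),g(b_1,b_2)) = \mathrm{OR}(b_1,b_2)$, and the OR in higher coordinates telescopes cleanly into $\mathrm{OR}(b_1,\ldots,b_n)$. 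This yields $\psi^\infty\circ\psi^\infty\circ\sigma_+(a) = (0,\mathrm{OR}(a_2,a_3),\mathrm{OR}(a_2,a_3,a_4),\ldots)$, and after applying $\sigma_-$ and adding $\xi(a)$ one recovers exactly $\psi^\infty(a)$.

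For the attractor property, I aim to prove by induction on $k \ge 1$ that there is an integer $N_k$, independent of the initial map, such that $\psi^{(N)}(a)_k = \psi^\infty(a)_k$ for every $N \ge N_k$ and every $a \in X^\mathbb{N}$. The base $k=1$ is immediate because the $\sigma_-$ in $\mathcal{R}_g$ forces $\psi^{(N+1)}(a)_1 = 0 + \xi(a)_1 = f(a_1,a_2)$, so $N_1 = 2$. For the step $k\ge 2 \to k+1$, the recursion $\psi^{(N+1)}(a)_{k+1} = \psi^{(N)}(\psi^{(N)}(\sigma_+(a)))_k + \xi(a)_{k+1}$ combined with the inductive hypothesis applied to the outer $\psi^{(N)}$ replaces that outer map by $\psi^\infty$; the finite support of $\psi^\infty(\cdot)_k$ then lets the inner $\psi^{(N)}$ also be replaced by $\psi^\infty$, collapsing the right-hand side to $\mathcal{R}_g[\psi^\infty](a)_{k+1} = \psi^\infty(a)_{k+1}$ and giving $N_{k+1} = N_k + 1$.

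The genuinely delicate point is the first inductive step $k = 1 \to 2$, because $\psi^\infty(c)_1 = f(c_1,c_2)$ depends on two coordinates while the $k=1$ hypothesis controls only one. I plan to bypass this using the special form of~(\ref{eq5fg}): in $\psi^{(N)}(a)_2 = f(f(a_2,a_3), \psi^{(N-1)}(\sigma_+(a))_2) + g(a_1,a_2)$ the outer $f$ vanishes automatically unless $(a_2,a_3) = (1,0)$, in which case the required condition reduces to $\psi^{(N-1)}(1,0,a_4,\ldots)_2 = 1$. Unrolling this value by one further step and invoking $f(0,\cdot) = 0$ together with $g(1,0) = 1$ makes the identity hold automatically for $N-1 \ge 3$, yielding $N_2 = 4$. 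Once this uniform coordinate-wise stabilization is in hand, convergence in the sense of Definition~\ref{def_lim} follows immediately: for any $a_N \to a$ in $X^\mathbb{N}$ and any fixed $k$, once $N$ exceeds both $N_k$ and the point after which $(a_N)_j = a_j$ on the finite support of $\psi^\infty(\cdot)_k$, we obtain $\psi^{(N)}(a_N)_k = \psi^\infty(a_N)_k = \psi^\infty(a)_k$.
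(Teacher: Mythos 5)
Your argument is correct and follows essentially the same route as the paper: an induction showing that the leading coordinates of $\psi^{(N)}$ stabilize to those of $\psi^{\infty}$ after roughly $k$ RG iterations (the paper phrases this as the decomposition $\psi^{(N)}=\xi+\sum_{n=3}^{N-2}\zeta_n+\varepsilon^{(N)}$ with the first $N-2$ components of $\varepsilon^{(N)}$ vanishing, which is exactly your $N_k=k+2$), including the same delicate handling of the second coordinate via the fact that $f$ can only output $1$ on the input $(1,0)$. Your up-front verification of $\mathcal{R}_g[\psi^{\infty}]=\psi^{\infty}$ and the telescoping of the OR's is the same computation as the paper's identities $\sigma_-\circ\xi\circ\psi^{(N)}\circ\sigma_+=\zeta_3$ and $\sigma_-\circ\zeta_n\circ\psi^{(N)}\circ\sigma_+=\zeta_{n+1}$, just organized by coordinate rather than by term.
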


We postpone the proof to the end of this section. Proposition~\ref{prog_fp} combined with Theorem~\ref{th3} assert that  the inviscid limit of regularization procedure defines a global-in-time solution of the ideal system for arbitrary initial and boundary conditions. This solution is universal: it does not depend on a choice of regularization. 
Figure~\ref{fig6} presents an example, which corresponds to the pre-blowup solution of Fig.~\ref{fig2}(a) extended to post-blowup times $t > T$. We remark that our symbolic system can be seen as a discrete analogue of the Burgers equation, for which a unique (shock wave) solution is obtained in the limit of vanishing viscosity~\cite{dafermos2005hyperbolic}.

\begin{figure}[tp]
\centering
\includegraphics[width=1\textwidth]{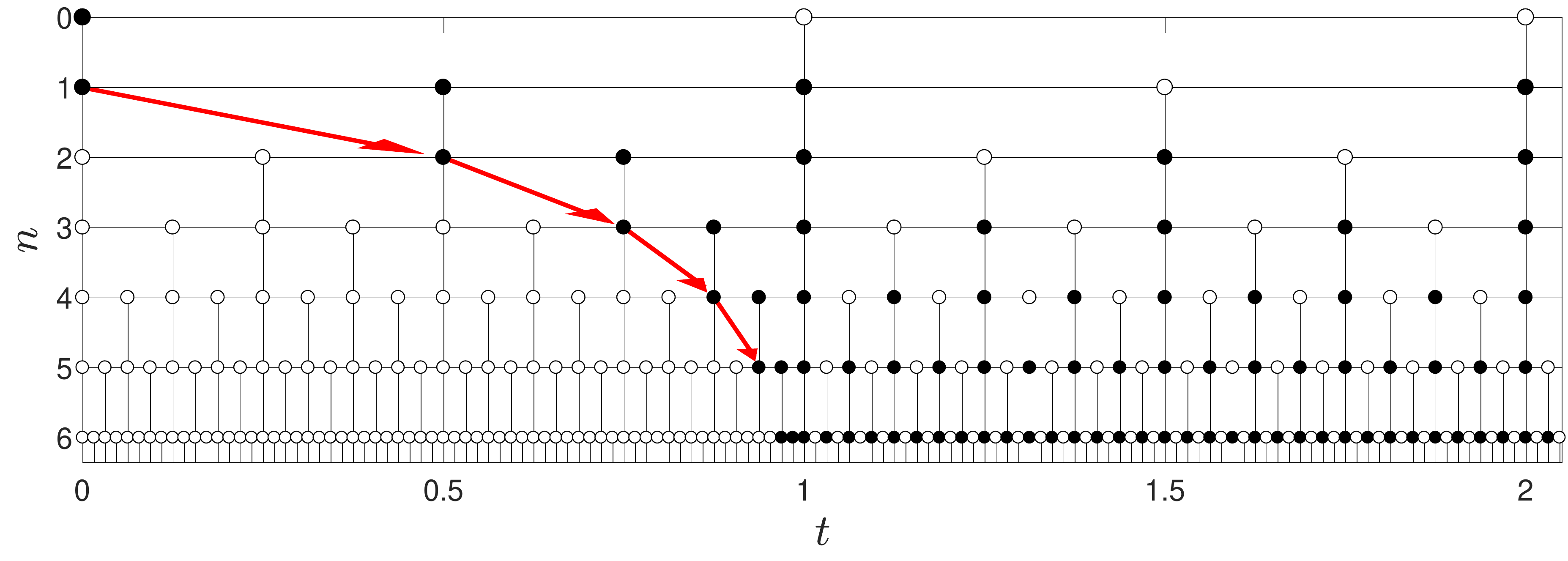}
\caption{Solution ${u}^\infty_n(t)$ of the symbolic model A, which has a finite time blowup at $T = 1$ shown by the  arrows. We consider the initial conditions ${a} = (1,0,0,0,\ldots)$ and boundary conditions $(b_0,b_1,b_2) = (1,0,0)$. Black circles correspond to ${u}^\infty_n(t) = 1$ and empty circles to ${u}^\infty_n(t) = 0$. This solution is obtained in the inviscid limit of regularization procedure.}
\label{fig6}
\end{figure}

A graphical picture of convergence is obtained by representing states ${{a}} = ({a}_1,{a}_2,\ldots) \in {X^{\mathbb{N}}}$ with elements $x \in \mathcal{C}$ from the middle-third Cantor set~\cite[\S1.9]{katok1997introduction}. The relation follows from the ternary (base 3) representation as 
	\begin{equation}
	\label{eq18b}
	x = 0.c_1c_2c_3\ldots,\quad 
	c_n = 	\left\{\begin{array}{cc}
	2,& {a}_n = 1;\\[2pt]
	0,& {a}_n = 0;
	\end{array}\right.
 	\end{equation}
where nonzero elements ${a}_n$ correspond to digits $2$ and no digits $1$ are present. Then, the discrete product topology in ${X^{\mathbb{N}}}$ corresponds to the topology induced by the usual distance $|x-x'|$ between points in $\mathcal{C} \subset \mathbb{R}$. In the Cantor set representation, the maps $\psi^{(N)}: {X^{\mathbb{N}}} \to {X^{\mathbb{N}}}$ are uniquely associated with the maps $\psi^{(N)}: \mathcal{C} \to \mathcal{C}$. Figure~\ref{fig5} shows the graphs of $\psi^{(N)}(x)$ for $N = 1,2,3,4$ and $10$, where the first map corresponds to (\ref{eq19b}). This figure conveniently visualizes the convergence of the RG dynamics to the fixed-point attractor as $N \to \infty$.
 
\begin{figure}[tp]
\centering
\includegraphics[width=0.45\textwidth]{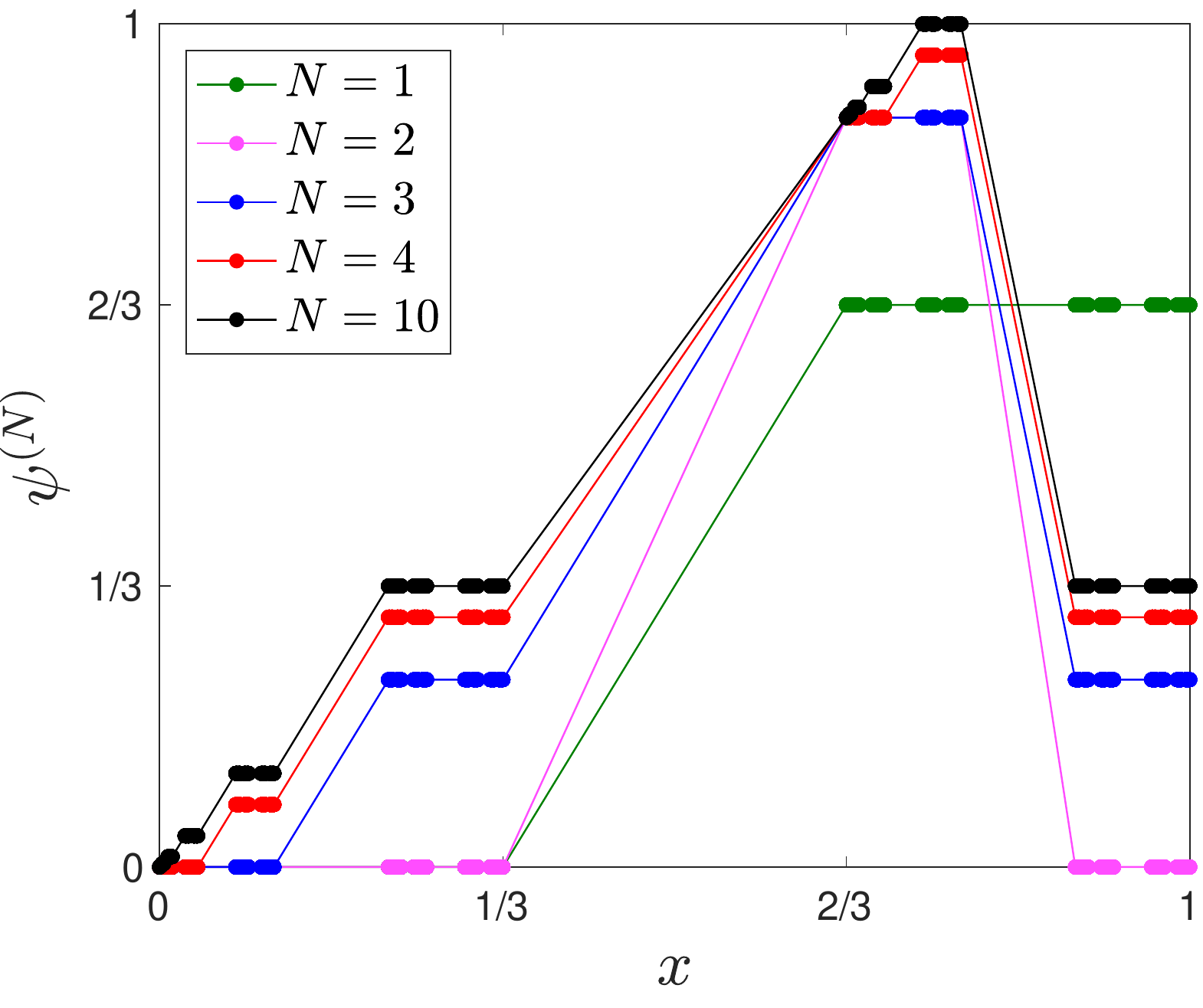}
\caption{The maps $\psi^{(N)}(x)$ in the Cantor set representation given by $N-1$ iterations of the RG operator for the model A. Thin lines connect Cantor set points for better visualization. With the increase of $N$, the functions converge to a fixed-point attractor.}
\label{fig5}
\end{figure}
\begin{figure}[tp]
\centering
\includegraphics[width=0.9\textwidth]{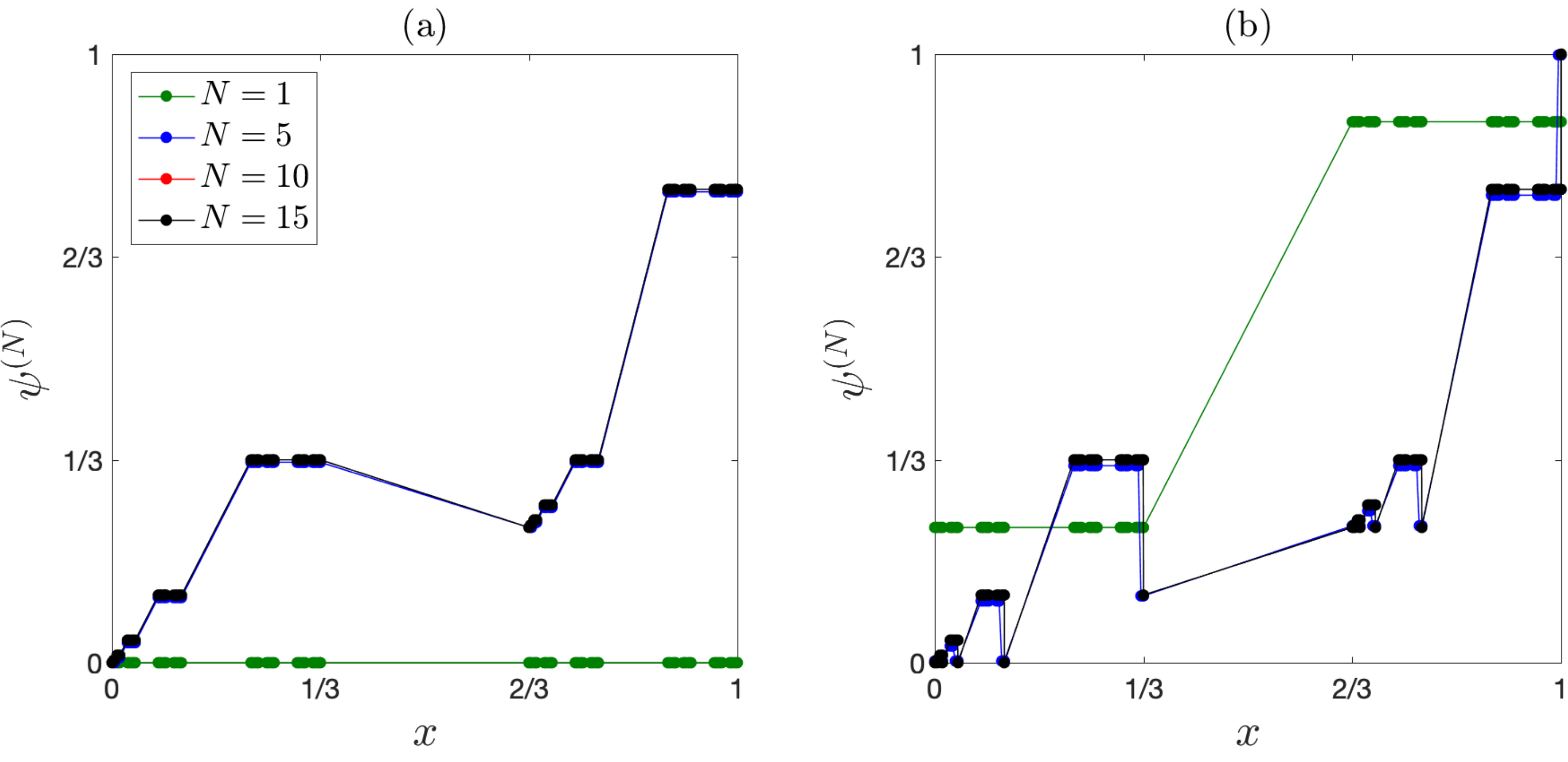}
\caption{The maps $\psi^{(N)}(x)$ in the Cantor set representation for the model B with two different regularizations: (a) $\psi^{(1)}(x) \equiv 0$ and (b) $\psi^{(1)}(x) \equiv 2/3$. The maps for $\psi^{(10)}$ and $\psi^{(15)}$  are very close (visually undistinguishable) pointing to the inviscid limit. However, the limits in the cases (a) and (b) are different. Also, one can see from the jump at $x = 1/3$ that the limiting function is discontinuous in the case (b).}
\label{fig5b}
\end{figure}

For the model B with functions (\ref{eq13}), the convergence is observed numerically in the case of regularization (\ref{eq14}); see Fig.~\ref{fig5b}(a). However, this is not true for other regularizations. For example, let us replace the regularization condition (\ref{eq14}) by
	\begin{equation}
	\label{eq14_exB}
	{u}_{N+1}^{(N)}(t) = 1, \quad {u}_n^{(N)}(t) = 0, \quad n > N+1, \quad t \ge 0,
 	\end{equation}
which assigns unit values at the scale $\ell_{N+1}$. In this case, 
	\begin{equation}
	\label{eq14_exB1}
	\psi^{(1)}: {a} \mapsto (f(a_1,1),1,0,0,\ldots).
 	\end{equation}
Numerical results shown in Fig.~\ref{fig5b}(b) suggest that the maps $\psi^{(N)}$ converge pointwize to a different limit. The limit $\psi^{(N)} \to \psi^\infty$ in the sense of Definition \ref{def_lim} does not exist, because the limiting map is discontinuous. We will show in Section~\ref{sec8_2} that the model A has the same inviscid limit solution in the presence of infinitesimal noise, while the model B becomes spontaneously stochastic.

\begin{proof}[Proof of Proposition~\ref{prog_fp}]
In this technical proof, we analyze sequentially RG iterations (\ref{eq30}) for an arbitrary initial map $\psi^{(1)} \in C({X^{\mathbb{N}}})$. 
For the first iteration, expression (\ref{eq30R}) yields 
	\begin{equation}
	\label{eq_exf_X3}
	\psi^{(2)} = \xi+\varepsilon^{(2)}, \quad \varepsilon^{(2)} = \sigma_- \circ \psi^{(1)} \circ \psi^{(1)} \circ \sigma_+.
 	\end{equation}
If follows from the second relation in (\ref{eq30Rs}) that $\varepsilon^{(2)}({a})$ has the vanishing first component. Similarly, we obtain 
	\begin{equation}
	\label{eq_exf_X4}
	\psi^{(3)} = \xi+\varepsilon^{(3)}, \quad
	\varepsilon^{(3)} = \sigma_- \circ \left(\xi+\varepsilon^{(2)}\right) \circ \left(\xi+\varepsilon^{(2)}\right) \circ \sigma_+.
 	\end{equation}
Let ${{u}} = \psi^{(3)}({a})$ and ${{u}}' = \varepsilon^{(3)}({a})$ for arbitrary ${a} \in {X^{\mathbb{N}}}$.  Using relations (\ref{eq30Rs}) and considering all possible values of $\xi$ from (\ref{eq30C}) and (\ref{eq5fg}), one can check that either $({u}'_1,{u}'_2) = (0,0)$, or $({u}'_1,{u}'_2) = (0,1)$ with $(a_2,a_3) = (1,0)$. In both cases, $({u}_1,{u}_2) \ne (1,0)$. Let us write the next iteration as
	\begin{equation}
	\label{eq_exf_X5}
	\psi^{(4)} = \xi+\varepsilon^{(4)}, \quad
	\varepsilon^{(4)} = \sigma_- \circ \left(\xi+\varepsilon^{(3)}\right) \circ \psi^{(3)} \circ \sigma_+.
 	\end{equation}
Using relations (\ref{eq30C}), (\ref{eq5fg}), (\ref{eq30Rs}) and the properties of $\psi^{(3)}$ and $\varepsilon^{(3)}$, one can verify that the first two components of $\varepsilon^{(4)}({a})$ vanish identically.

It remains to show that all further iterations take the form
	\begin{equation}
	\label{eq_exf_X6}
	\psi^{(N)} = \xi+\sum_{n = 3}^{N-2}{\zeta_n}+\varepsilon^{(N)}, \quad N \ge 4,
 	\end{equation}
where the first $N-2$ components of the map $\varepsilon^{(N)}({a})$ vanish identically. One can see that this expression implies the convergence (\ref{eq31a}) to the fixed-point attractor (\ref{eq_exf_X1}).

We already proved relation (\ref{eq_exf_X6}) for $N = 4$. Assuming that it holds for some $N \ge 4$, we express
	\begin{equation}
	\label{eq_exf_X7}
	\begin{array}{rl}
		\psi^{(N+1)} 
		&
		= \mathcal{R}_g[\psi^{(N)}] 
		= \xi+\sigma_- \circ \psi^{(N)} \circ \psi^{(N)} \circ \sigma_+ 
		\\[7pt] 
		& \displaystyle
		= \xi+\sigma_- \circ \xi \circ \psi^{(N)} \circ \sigma_+ 
		+\left(\sum_{n = 3}^{N-2} 
		\sigma_- \circ \zeta_n \circ \psi^{(N)} \circ \sigma_+\right)
		+\sigma_- \circ \varepsilon^{(N)} \circ \psi^{(N)} \circ \sigma_+,
	\end{array}
 	\end{equation}
where we substituted (\ref{eq_exf_X6}) for the first $\psi^{(N)}$.
Using explicit forms of the maps $\xi$ and $\zeta_n$ and expression (\ref{eq_exf_X6}), it is straightforward to verify that
	\begin{equation}
	\label{eq_exf_X8}
	\sigma_- \circ \xi \circ \psi^{(N)} \circ \sigma_+ 
	= \sigma_- \circ \xi \circ \left(\xi+\sum_{n = 3}^{N-2}{\zeta_n}+\varepsilon^{(N)}\right) \circ \sigma_+
	= \zeta_3, 
 	\end{equation}
	\begin{equation}
	\label{eq_exf_X8b}
	\sigma_- \circ \zeta_n \circ \psi^{(N)} \circ \sigma_+ = 
	\sigma_- \circ \zeta_n \circ \left(\xi+\sum_{n = 3}^{N-2}{\zeta_n}+\varepsilon^{(N)}\right) \circ \sigma_+ = \zeta_{n+1}
 	\end{equation}
for any $n = 3,\ldots,N-2$. Expressions (\ref{eq_exf_X7})--(\ref{eq_exf_X8b}) yield the representation (\ref{eq_exf_X6}) for $N+1$ with $\varepsilon^{(N+1)} = \sigma_- \circ \varepsilon^{(N)} \circ \psi^{(N)} \circ \sigma_+$. Due to the map $\sigma_-$ from (\ref{eq30Rs}), the first $N-1$ components of the map $\varepsilon^{(N+1)}$ vanish. This concludes the proof by induction. We remark that, according to (\ref{eq_exf_X6}), the first $N-2$ components of the maps $\psi^{\infty}$ and $\psi^{(N)}$ coincide.
\end{proof}

\section{Stochastically regularized solutions}\label{sec8}

Now let us consider the case when the sequence of regularized flow maps $\psi^{(N)}$ does not converge to a fixed point $\psi^\infty$ of the RG operator. 
For example, the RG dynamics can be chaotic and, therefore, require a probabilistic formulation for the inviscid limit $N \to \infty$. Such formulation follows naturally in fluid dynamics, where viscous forces at small scales coexist with microscopic (e.g., thermal) fluctuations~\cite{ruelle1979microscopic,landau1987fluid,eyink2021dissipation}. This motivates a definition of the stochastically regularized system below: we modify the previous Definition~\ref{def_det} by adding random perturbations (noise) at the regularized scale $\ell_{N+1}$. This perturbation uses a sequence ${x}_0,{x}_1,\ldots$ of independent and identically distributed (i.i.d.) random variables with values in $X$  and probability measure $\mu_x$. 

\begin{definition} \label{def_st}
Given a regularization scale $\ell_N$ with $N \in \mathbb{N}$, we introduce the \textit{stochastically regularized solution} as a collection of random variables $\mathfrak{u}_n^{(N)}(t)$ on the lattice $(n,t) \in \mathcal{L}$. For $n > N$, we define
	\begin{equation}
	\label{eqSK1}
	\mathfrak{u}_{n}^{(N)}(t) = \left\{\begin{array}{ll}
	{x}_{t/\tau_n}, & n = N+1,\ t \ge 0; \\[3pt]
	0, & n > N+1,\ t \ge 0.
	\end{array}\right.
 	\end{equation}
The remaining variables $\mathfrak{u}_{n}^{(N)}(t)$, $n \le N$, are given by equations (\ref{eq4}) with initial conditions (\ref{eq4c}) for $n = 1,\ldots,N$ and boundary conditions (\ref{eq4b}). 
\end{definition}

Random variables $\mathfrak{u}_{n}^{(N)}(t)$ in Definition~\ref{def_st} are expressed iteratively as functions of initial and boundary conditions and random variables $x_m$. These functions determine probability distributions of variables $\mathfrak{u}_{n}^{(N)}(t)$ in terms of the distributions of $x_m$.  In this section, we focus on solutions $\mathfrak{u}^{(N)}(t) = (\mathfrak{u}_{1}^{(N)}(t),\mathfrak{u}_{2}^{(N)}(t),\ldots)$ at half-integer and integer times $t$. Random variables $\mathfrak{u}^{(N)}(t)$ are valued in the measurable space $(X^\mathbb{N},\mathcal{A})$, where $\mathcal{A}$ is the Borel $\sigma$-algebra of measurable sets in the space $X^{\mathbb{N}}$ with the product topology.
Kolmogorov's extension theorem defined the unique probability measure for each random variable $\mathfrak{u}^{(N)}(t)$, which we denote as $P^{(N)}_t(A|a)$ indicating the dependence on initial condition $a$. It defines the probability of $\mathfrak{u}^{(N)}(t) \in A$ for any $A \in \mathcal{A}$.

One can see that $P^{(N)}_t(A|a)$ is a probability (Markov) kernel in $(X^\mathbb{N},\mathcal{A})$. We now introduce operations with probability kernels (see, e.g.,~\cite{taylor2001user}), which replace respective operation with maps from the deterministic description.
We recall that a probability kernel $\Psi(A|a)$ is a function from $\mathcal{A} \times X^\mathbb{N}$ to the interval $[0,1]$, such that $\Psi(\cdot|a)$ is a probability measure for any $a \in X^\mathbb{N}$ and $\Psi(A|a)$ is a measurable function of $a$ for any $A \in \mathcal{A}$. 
For an arbitrary measure $\mu$ on ${X^{\mathbb{N}}}$ and probability kernel $\Psi$, one defines the measure $\Psi\mu$ as
	\begin{equation}
	\label{eqSS1}
	\quad \Psi\mu(A) = \int_{{X^{\mathbb{N}}}} \Psi(A|a)\mu(d{a}),\quad A \in \mathcal{A}.
 	\end{equation}
For two kernels $\Psi$ and $\Psi'$, their composition $\Psi' \circ \Psi$ is a kernel with 
	\begin{equation}
	\label{eqSK3}
	(\Psi' \circ \Psi)(A|a) = (\Psi' \mu)(A), \quad \mu = \Psi(\cdot|a),
 	\end{equation}
where we use the previous definition (\ref{eqSS1}). Measures (\ref{eqSK3}) describe a probability distribution for a composition $\psi' \circ \psi(a)$, where $\psi(a)$ and $\psi'(a)$ are statistically independent random variables with the distributions described by the kernels $\Psi$ and $\Psi'$.
Let us introduce the deterministic probability kernel $\Xi$ representing the coupling map $\xi(a)$ as the Dirac meaure $\Xi(\cdot|a) = \delta_{\xi(a)}$. 
We define the convolution kernel $\Psi \ast  \Xi$ as 
	\begin{equation}
	\label{eqSK5}
	(\Psi \ast  \Xi)(A|a) = \Psi(A_{\xi({a})}|a), \quad
	A_{\xi({a})} = \{{{u}} \in {X^{\mathbb{N}}}: {{u}}+\xi({a}) \in A\},
	\quad A \in \mathcal{A}.
 	\end{equation}
This expression follows from the conventional definition for the convolution of measures $\Psi_{{a}}$ and $\Xi$, which describes the probability distribution of the sum $\psi(a)+\xi({a})$ with a random variable $\psi(a)$. Similarly, we introduce the deterministic probability kernels with the Dirac measures $\Sigma_+(\cdot|a) = \delta_{\sigma_+({a})}$, $\Sigma_-(\cdot|a) = \delta_{\sigma_-({a})}$ and $B_b(\cdot|a) = \delta_{\beta_b({a})}$, which represent the shifts from (\ref{eq30Rs}) and the map from (\ref{eq19flowX}). 

We now describe the kernels $P_t^{(N)}$ following the same line of derivations as for the deterministic solutions $u^{(N)}(t)$ in Section~\ref{sec_DR}.
At time $t = 1$, the kernel ${P}^{(N)}_1$ depends on the boundary condition $b_0$;  see Fig.~\ref{fig1}. We introduce a specific notation for this kernel as $\Phi^{(N)}_b$ defined for any $b \in X$ such that
	\begin{equation}
	\label{eqKn1}
	P^{(N)}_1 = \Phi^{(N)}_{b_0}.
 	\end{equation}
Since the system is translation invariant with respect to unit-time steps, and recalling that random variables $x_m$ are statistically independent, transition from $t = 1$ to $t = 2$ is given by the kernel $\Phi^{(N)}_{b_1}$, etc. Hence, at any integer time we find
	\begin{equation}
	\label{eqKn2}
	{P}^{(N)}_t = \Phi^{(N)}_{b_{t-1}} \circ \cdots \circ \Phi^{(N)}_{b_0}, \quad
	t \in \mathbb{N}.
 	\end{equation}

Consider now the half-time kernel and denote it by
	\begin{equation}
	\label{eqKn3}
	\Psi^{(N)} = {P}^{(N)}_{\tau_1}.
 	\end{equation}
For $N = 1$, using relations (\ref{eqSK1}) in (\ref{eq4}) and (\ref{eq4c}), we find  ${\mathfrak{u}}^{(1)}(\tau_1) = \left(f(a_1,{x}_0),{x}_2,0,0,\ldots\right)$, where $x_0$ and ${x}_2$ are independent random variables with the probability measure $\mu_x$. The probability distribution of $f(a_1,{x}_0)$ is given by the push-forward measure $f(a_1,\cdot)_\sharp\mu_x$. This yields $\Psi^{(1)}$ as a family of measures
	\begin{equation}
	\label{eqSK2m}
	\Psi^{(1)}(\cdot|a) = f(a_1,\cdot)_\sharp\mu_x \times \mu_x \times \delta_0 \times \delta_0 \times \cdots.
 	\end{equation}

In the deterministic case, we derived relation (\ref{eq19flowAt}) for the maps $\phi^{(N)}_{b_0}$ and $\psi^{(N)}$. This derivation has a straightforward extension to the stochastically regularized solutions, where the maps $\phi^{(N)}_{b_0}$, $\psi^{(N)}$ and $\beta_{b_0}$ are substituted by the respective kernels $\Phi^{(N)}_{b_0}$, $\Psi^{(N)}$ and ${{B}_{b_0}}$ and the addition by the convolution. This yields
	\begin{equation}
	\label{eqK_e1M}
	\Phi^{(N)}_{b_0} = \left(\Psi^{(N)} \circ \Psi^{(N)}\right)*{{B}_{b_0}}.
 	\end{equation}
We now express $\Psi^{(N+1)}$ in terms of $\Psi^{(N)}$ by introducing a \textit{stochastic RG operator} acting on probability kernels. 

\begin{definition}
The stochastic RG operator maps probability kernels to probability kernels as
	\begin{equation}
	\label{eqS7}
	\mathfrak{R}_g: \Psi \mapsto \left(\Sigma_- \circ \Psi \circ \Psi \circ \Sigma_+\right) \ast \Xi.
 	\end{equation}
\end{definition}

\begin{proposition}
\label{prop6} 
For any $N \in \mathbb{N}$, the following iterative relation holds:
	\begin{equation}
	\label{eqS6}
	\Psi^{(N+1)} = \mathfrak{R}_g[\Psi^{(N)}].
 	\end{equation}
\end{proposition}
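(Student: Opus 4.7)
The plan is to mirror the proof of Proposition~\ref{prop3} step by step, lifting every map to its probability-kernel counterpart and every addition to a Dirac-kernel convolution. The deterministic identity (\ref{eq19flowAt}) is replaced by its kernel analogue (\ref{eqK_e1M}), and the scaling symmetry (\ref{eq5reg}) transfers to the stochastic setting once I verify that the rescaled noise is still an i.i.d.\ sequence with law $\mu_x$.

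First I would apply the change of variables $\tilde{\mathfrak{u}}^{(N)}_n(t)=\mathfrak{u}^{(N+1)}_{n+1}(t/2)$ from (\ref{eq5N}) to the level-$(N{+}1)$ stochastically regularized solution. The ideal equations (\ref{eq4}) are invariant under this rescaling by construction, and for the noise, (\ref{eqSK1}) gives
\[
\tilde{\mathfrak{u}}^{(N)}_{N+1}(t)=\mathfrak{u}^{(N+1)}_{N+2}(t/2)=x_{(t/2)/\tau_{N+2}}=x_{t/\tau_{N+1}},
\]
which is a subsequence of $(x_m)$ and hence i.i.d.\ with law $\mu_x$. Thus $\tilde{\mathfrak{u}}^{(N)}$ has the same finite-dimensional distributions as a level-$N$ stochastically regularized solution with $\tilde{a}=\sigma_+(a)$ and $\tilde{b}_0=a_1$. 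Applying (\ref{eqK_e1M}) at level $N$ with $\tilde{b}_0=a_1$ then gives the conditional law of $\tilde{\mathfrak{u}}^{(N)}(1)$ given $\sigma_+(a)$ as $\bigl((\Psi^{(N)}\circ\Psi^{(N)})\ast B_{a_1}\bigr)(\cdot\mid\sigma_+(a))$; the independence required by the inner composition is genuine because the two half-time sweeps of the level-$N$ process use disjoint blocks of $(x_m)$.

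Second, I would assemble the full half-time law $\Psi^{(N+1)}(\cdot\mid a)$ from this rescaled tail together with the first component $\mathfrak{u}^{(N+1)}_1(\tau_1)=f(a_1,a_2)$, which is deterministic since it is computed from (\ref{eq4})--(\ref{eq4c}) at scales $\ell_1$ and $\ell_2$ with no reference to noise. At the kernel level, $\Sigma_-$ shifts the rescaled tail into positions $\ge 2$, and convolution with $\Xi$ then adds the deterministic vector $\xi(a)=(f(a_1,a_2),g(a_1,a_2),0,\ldots)$ in the Dirac sense. Unwinding component by component, exactly as in (\ref{eq19flowPrR})--(\ref{eq19flowPrS}): position $1$ receives $f(a_1,a_2)$, matching $\mathfrak{u}^{(N+1)}_1(\tau_1)$; position $2$ receives the first rescaled-tail entry, equal in distribution to the ``pre-boundary'' value $\mathfrak{u}^{(N+1)}_2(\tau_1)-g(a_1,a_2)$, plus $g(a_1,a_2)$ contributed by $\xi$, summing to $\mathfrak{u}^{(N+1)}_2(\tau_1)$; and positions $\ge 3$ receive the remaining rescaled-tail entries unchanged. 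Hence $(\Sigma_-\circ\Psi^{(N)}\circ\Psi^{(N)}\circ\Sigma_+)\ast\Xi=\mathfrak{R}_g[\Psi^{(N)}]$ coincides with $\Psi^{(N+1)}$.

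The hard part, I expect, will be the bookkeeping of independence and of Dirac convolutions rather than any substantive analytic difficulty: one must check that the two copies of $\Psi^{(N)}$ in the composition draw from disjoint and therefore independent noise blocks, that $\Sigma_\pm$, $B_b$, and $\Xi$ commute with the intrinsic randomness because their measures are Dirac, and that measurability propagates through the operations (\ref{eqSS1})--(\ref{eqSK5}). Once these points are spelled out, the argument is a line-by-line transcription of the proof of Proposition~\ref{prop3} with $+$ replaced by $\ast$.
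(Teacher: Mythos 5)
Your proposal is correct and follows essentially the same route as the paper, which likewise proves Proposition~\ref{prop6} by transcribing the proof of Proposition~\ref{prop3} with maps replaced by kernels, sums by convolutions, and the scaling relation (\ref{eq5N}) applied to the random variables $\tilde{\mathfrak{u}}^{(N)}_n(t)=\mathfrak{u}^{(N+1)}_{n+1}(t/2)$. You in fact supply more detail than the paper does on the two points it leaves implicit --- that the rescaled noise is again i.i.d.\ with law $\mu_x$, and that the two half-time sweeps draw on disjoint, hence independent, blocks of $(x_m)$ so that the kernel composition $\Psi^{(N)}\circ\Psi^{(N)}$ genuinely represents the law of the composed random maps.
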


\begin{proof}
The proof follows exactly the same line of derivation as the proof of Proposition~\ref{prop3}, observing that the same relations are satisfied by random variables $\mathfrak{u}_{n}^{(N)}(t)$ of the stochastically regularized solution. For example, the scaling relation (\ref{eq5N}) is satisfied for the random variables $\tilde{\mathfrak{u}}_{n}^{(N)}(t) = \mathfrak{u}_{n+1}^{(N+1)}(t/2)$ with the initial and boundary conditions (\ref{eq5Nic_b}). The functions $\psi^{(N)}$ and $\phi_{b_0}^{(N)}$ mapping initial conditions to solutions $u^{(N)}(\tau_1)$ and $u^{(N)}(1)$ become the probability kernels $\Psi^{(N)}$ and $\Phi_{b_0}^{(N)}$ providing probability measures for random variables $\mathfrak{u}^{(N)}(\tau_1)$ and $\mathfrak{u}^{(N)}(1)$. Recall that compositions and sums of maps correspond to compositions and convolutions for kernels. 
For example, the next step in (\ref{eq19flowPrA0}) is written using (\ref{eqK_e1M}) as $\Phi^{({N})}_{\tilde{b}_0} = \big(\Psi^{({N})} \circ \Psi^{({N})}\big) * B_{\tilde{b}_0}$. Rewriting the other steps in a similar way, one arrives at the relations $\Psi^{(N+1)} = \left(\Sigma_- \circ \Psi \circ \Psi \circ \Sigma_+\right) \ast \Xi$, which is the probabilistic form of map equation (\ref{eq30P}).
\end{proof}

\section{Spontaneously stochastic solutions}\label{sec9}

Let us study the inviscid limit $N \to \infty$. Following~\cite{karr1975weak}, we introduce the following definition for the convergence of kernels $\Psi^{(N)}$, which is a stochastic version of Definition~\ref{def_lim}.

\begin{definition}\label{st_lim}
We say that $\Psi^{(N)} \to \Psi^\infty$ is a limiting kernel for a sequence $\Psi^{(N)}$ if 
	\begin{equation}
	\label{eqSlim}
	\int \varphi({{u}})\Psi^\infty(d{{u}}|a) = 
	\lim_{N \to \infty} \int \varphi({{u}})\Psi^{(N)}(d{{u}}|a_N)
 	\end{equation}
holds for any converging sequence ${a}_N \to {a}$ in ${X^{\mathbb{N}}}$ and any bounded and uniformly continuous function $\varphi \in C({X^{\mathbb{N}}})$.
\end{definition}

Considering the limit $N \to \infty$ in both sides of relation (\ref{eqS6}), the convergence $\Psi^{(N)} \to \Psi^\infty$ implies that the limiting kernel
	\begin{equation}
	\label{eqSS2b}
	\Psi^\infty = \mathfrak{R}_g[\Psi^\infty]
 	\end{equation}
is a fixed point of the stochastic RG operator. This fixed-point probability kernel yields a complete description of the inviscid limit for stochastically regularized solutions as follows.

Consider the full stochastically regularized solution $\{\mathfrak{u}_n^{(N)}(t)\}_{(n,t) \in \mathcal{L}}$ as a stochastic process. It represents a collection of random variables taking values in the product space $X^\mathcal{L}$ with the product topology and the corresponding Borel $\sigma$-algebra of measurable sets. Given the boundary and initial conditions, Kolmogorov's extension theorem defines the unique probability measure $\mathcal{P}^{(N)}$ on $X^\mathcal{L}$ for the stochastic process $\{\mathfrak{u}_n^{(N)}(t)\}_{(n,t) \in \mathcal{L}}$. Notice that $\mathcal{P}^{(N)}$ depends on initial and boundary conditions, but we do not specify this explicitly in the notation.

\begin{theorem}[Spontaneous stochasticity]
\label{th3ss}
Let $\Psi^\infty$ be a fixed-point kernel of the stochastic RG operator providing the limit $\Psi^{(N)} \to \Psi^\infty$. Then, for any initial conditions ${a}=(a_1,a_2,\ldots)$ and boundary conditions $(b_0,b_1,\ldots)$,  probability measures $\mathcal{P}^{(N)}$ of stochastically regularized solutions converge weakly to a measure
	\begin{equation}
	\label{eq32stoch}
	\mathcal{P}^{\infty} = \lim_{N \to \infty} \mathcal{P}^{(N)}.
 	\end{equation}
The measure $\mathcal{P}^{\infty}$ is supported on a set of (weak) solutions of the ideal system (\ref{eq4})--(\ref{eq4b}). At integer times, statistics (\ref{eq32stoch}) define a Markov chain with the kernels
	\begin{equation}
	\label{eq41}
	{P}^{\infty}_t = \Phi^{\infty}_{b_{t-1}} \circ \cdots \circ \Phi^{\infty}_{b_0},\quad
	\Phi^\infty_{b} = \left(\Psi^{\infty} \circ \Psi^\infty\right)*{{B}_{b}}, \quad
	t \in \mathbb{N}.
 	\end{equation}
Probability distributions at non-integer times are described by Proposition~\ref{prop8stat} in the Appendix. 
\end{theorem}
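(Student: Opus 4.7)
The plan is to mirror the proof of Theorem~\ref{th3}: first upgrade the assumed convergence $\Psi^{(N)}\to\Psi^\infty$ to convergence of all finite-time transition kernels, then assemble these into weak convergence of the full process measures, and finally transfer the ideal equations to the limit using continuity of $f$ and $g$.

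The main technical step is a compatibility lemma saying that the mode of convergence of Definition~\ref{st_lim} passes through the two operations appearing in the stochastic RG operator $\mathfrak{R}_g$. For composition, given a bounded uniformly continuous $\varphi\in C({X^{\mathbb{N}}})$, introduce the auxiliary function $\varphi_N(b)=\int\varphi({u})\,\Psi^{(N)}(d{u}|b)$, uniformly bounded by $\|\varphi\|_\infty$. Hypothesis (\ref{eqSlim}) applied to $\varphi$ says precisely $\varphi_N(b_N)\to\varphi^\infty(b)$ whenever $b_N\to b$; combined with uniform boundedness this continuous-convergence property lets one interchange limit and integration in $\int\varphi_N(b)\,\Psi^{(N)}(db|a_N)$, yielding $\Psi^{(N)}\circ\Psi^{(N)}\to\Psi^\infty\circ\Psi^\infty$ in the sense of (\ref{eqSlim}). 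Convolution with the deterministic kernels $B_b$, $\Sigma_\pm$, $\Xi$ is a continuous pointwise translation and its compatibility with the same convergence is immediate from continuity of $\xi$, $\sigma_\pm$, $\beta_b$. These two facts together give $\Phi^{(N)}_b\to\Phi^\infty_b$ for every $b$, and iterating composition $t$ times produces ${P}^{(N)}_t\to{P}^\infty_t$ with ${P}^\infty_t$ as in (\ref{eq41}); fractional times are covered by Proposition~\ref{prop8stat}.

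To lift this to weak convergence of the full process measures $\mathcal{P}^{(N)}$ on $X^\mathcal{L}$, I would argue through finite-dimensional distributions. For any finite collection of lattice sites $(n_1,t_1),\ldots,(n_k,t_k)\in\mathcal{L}$, the joint distribution of $(\mathfrak{u}_{n_j}^{(N)}(t_j))_{j=1}^k$ is a continuous pushforward of one of the kernels ${P}^{(N)}_t$ (take $t$ to be a common integer upper bound and apply the finitely many continuous maps $f,g$ producing intermediate values), so the continuous mapping theorem together with the kernel convergence above gives weak convergence of every finite-dimensional marginal. Since $\mathcal{L}$ is countable and $X$ is a complete separable metric space, $X^\mathcal{L}$ with the product topology is Polish; tightness on the product follows from tightness of the marginals, and bounded uniformly continuous cylinder functions form a determining class, so weak convergence of all finite-dimensional marginals upgrades to $\mathcal{P}^{(N)}\Rightarrow\mathcal{P}^\infty$.

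For the support claim, for each $(n,t)\in\mathcal{L}$ the set $E_{n,t}=\{u\in X^\mathcal{L}: u_n(t)\text{ satisfies }(\ref{eq4})\text{ and the imposed data}\}$ is closed by continuity of $f,g$, and $\mathcal{P}^{(N)}(E_{n,t})=1$ whenever $N\ge n$. The portmanteau theorem gives $\mathcal{P}^\infty(E_{n,t})=1$; since $\mathcal{L}$ is countable, intersecting over all $(n,t)$ preserves full measure, so $\mathcal{P}^\infty$ concentrates on weak solutions of (\ref{eq4})--(\ref{eq4b}). I expect the main obstacle to be the composition-compatibility lemma: the convergence mode in (\ref{eqSlim}) controls only one kernel integration at a time, and one must argue carefully that the continuous convergence of the inner integrands $\varphi_N$ survives a second integration against the moving outer measure $\Psi^{(N)}(\cdot|a_N)$, which is where the uniform bound on $\varphi_N$ and the uniform continuity hypothesis on the test class interact non-trivially.
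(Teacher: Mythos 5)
Your proposal is correct and follows the same skeleton as the paper's proof, the difference being that you re-derive what the paper outsources to a citation: the paper disposes of the central convergence claim by invoking Theorem~4 of Karr (1975), which states precisely that convergence of transition kernels in the sense of Definition~\ref{st_lim} implies weak convergence of the laws of the associated Markov chains. Your ``compatibility lemma'' --- continuous convergence of kernels survives composition, proved via the uniformly bounded, continuously convergent integrands $\varphi_N$ and the standard lemma on integrating such functions against weakly convergent measures --- is exactly the engine inside that citation, and you execute it correctly, including the key point that one does not need $\varphi_N$ or $\varphi^\infty$ to remain in the uniformly continuous test class, only their continuous convergence and uniform bound. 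Your passage to the full process measure (finite-dimensional convergence plus tightness on the countable product) makes explicit what the paper dismisses as a ``straightforward extension'' of Karr's argument to non-integer times, and your support argument (portmanteau applied to the closed sets $E_{n,t}$, then a countable intersection) is the paper's nested-sets argument with the portmanteau step spelled out. One sentence needs repair: the joint law of $\big(\mathfrak{u}_{n_j}^{(N)}(t_j)\big)_{j=1}^k$ is \emph{not} a continuous pushforward of the single terminal marginal $P^{(N)}_t$, since the dynamics is not invertible and earlier values cannot be recovered from the state at a later integer time. What you actually need --- and what your framework already supplies --- is that every lattice variable $u_n(t')$ is a coordinate of the checkpoint state ${u}(t')$ from Proposition~\ref{prop8stat}, so each finite-dimensional distribution is a coordinate projection of the joint law of finitely many steps of the time-inhomogeneous Markov chain of checkpoint states; that joint law converges by iterating your compatibility lemma along the transition kernels in (\ref{eqKn2}), (\ref{eqK_e1M}) and (\ref{eqA2_2bst}). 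With that substitution the argument is complete.
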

\begin{proof}
Theorem~4 from \cite{karr1975weak} states that the convergence $\Psi^{(N)} \to \Psi^\infty$ in Definition~\ref{st_lim} implies the weak convergence of probability measures for respective Markov chains. We showed that the stochastically regularized solution $\mathfrak{u}^{(N)}(t)$ at integer times is a Markov chain with probability kernels given by (\ref{eqKn2}) and (\ref{eqK_e1M}). Hence, the convergence (\ref{eq32stoch}) with relations (\ref{eq41}) hold for probability measures $\mathcal{P}^{(N)}$ considered (projected) at integer times. The full stochastically regularized solution includes non-integer times, which is the intrinsic property of our lattice $\mathcal{L}$. As we show in Proposition~\ref{prop8stat} in the Appendix, solutions at non-integer times $t = m\tau_n$ can be written in terms of compositions of kernels $\Psi^{(N-n+1)},\ldots,\Psi^{(N)}$. As a consequence, one can verify that the proof in \cite{karr1975weak} has a straightforward extension to the full probability measures $\mathcal{P}^{(N)}$. 

It remains to prove that the measure $\mathcal{P}^{(N)}$ is supported on a set of weak solutions of the ideal system for the respective initial and boundary conditions. Let $E_N \subset X^\mathcal{L}$ be a subset containing the fields $\{{u}_n(t)\}_{(n,t) \in \mathcal{L}} \in X^{\mathcal{L}}$ satisfying equations (\ref{eq4})--(\ref{eq4b}) of the ideal system at all scales $n \le N$. This yields a nested sequence of subsets with $E_1 \supset E_2 \supset E_3 \cdots$.
By Definition~\ref{def_st}, the measure $\mathcal{P}^{(N)}$ is supported on $E_N$. Hence, the weak limit $\mathcal{P}^{\infty}$ of these measures is supported on the intersection $\bigcap_{n \in \mathbb{N}} E_N$, which is a set of all weak solutions of the ideal system.
\end{proof}

Theorem~\ref{th3ss} states that the uncertainty can persist despite the stochastic regularization is completely removed in the inviscid limit. It is manifested in a stochastic process determined by the limiting measures $\mathcal{P}^{\infty}$. Each realization of this process (with probability 1) is a solution of the deterministic ideal system with deterministic initial and boundary conditions. We now give the formal definition and criterion for the spontaneous stochasticity. 

\begin{definition}
\label{def_sps}
Ideal system (\ref{eq4})--(\ref{eq4b}) is called \textit{spontaneously stochastic} if for some initial and boundary conditions the weak inviscid limit (\ref{eq32stoch}) exists with a nontrivial (not deterministic) distribution $\mathcal{P}^{\infty}$.
\end{definition}

\begin{corollary}
\label{cor_sps}
Ideal system (\ref{eq4})--(\ref{eq4b}) is spontaneously stochastic if the limit $\Psi^{(N)} \to \Psi^\infty$ exists and, for some $a \in X^\mathbb{N}$, the probability measure $\Psi^\infty(\cdot|a)$ is nontrivial.
\end{corollary}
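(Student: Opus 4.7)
The plan is to reduce the claim to a statement about a single one-dimensional marginal of $\mathcal{P}^\infty$ at the half-integer time $t = \tau_1 = 1/2$. Since $\Psi^{(N)}$ is by construction (see (\ref{eqKn3})) the probability kernel describing the distribution of $\mathfrak{u}^{(N)}(\tau_1)$, the marginal of $\mathcal{P}^{(N)}$ at $t = \tau_1$, viewed as a probability measure on $X^{\mathbb{N}}$, is exactly $\Psi^{(N)}(\cdot\,|\,a)$ once the initial condition $a$ is fixed.

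First I would select initial conditions to be the distinguished point $a \in X^{\mathbb{N}}$ for which $\Psi^\infty(\cdot\,|\,a)$ is nontrivial, and fix any boundary sequence $(b_0,b_1,\ldots)$. By the hypothesis $\Psi^{(N)} \to \Psi^\infty$ and Proposition~\ref{prop6}, all iterations of the stochastic RG operator starting from $\Psi^{(1)}$ converge, so the assumption of Theorem~\ref{th3ss} is satisfied and the weak limit $\mathcal{P}^\infty = \lim_{N\to\infty} \mathcal{P}^{(N)}$ exists on $X^{\mathcal{L}}$ with the product topology.

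Next I would identify the $t = \tau_1$ marginal of $\mathcal{P}^\infty$. The coordinate projection $\pi_{\tau_1}: X^{\mathcal{L}} \to X^{\mathbb{N}}$ sending $\{u_n(t)\}_{(n,t)\in\mathcal{L}}$ to $(u_1(\tau_1),u_2(\tau_1),\ldots)$ is continuous by the definition of the product topology. Hence weak convergence $\mathcal{P}^{(N)} \to \mathcal{P}^\infty$ pushes forward to weak convergence $(\pi_{\tau_1})_\sharp \mathcal{P}^{(N)} \to (\pi_{\tau_1})_\sharp \mathcal{P}^\infty$ on $X^{\mathbb{N}}$. Since $(\pi_{\tau_1})_\sharp \mathcal{P}^{(N)} = \Psi^{(N)}(\cdot\,|\,a)$ and, by Definition~\ref{st_lim}, $\Psi^{(N)}(\cdot\,|\,a) \to \Psi^\infty(\cdot\,|\,a)$ weakly (taking the constant sequence $a_N \equiv a$), uniqueness of weak limits forces $(\pi_{\tau_1})_\sharp \mathcal{P}^\infty = \Psi^\infty(\cdot\,|\,a)$.

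Finally, because $\Psi^\infty(\cdot\,|\,a)$ is nontrivial by hypothesis, so is its pre-image marginal, and therefore $\mathcal{P}^\infty$ itself cannot be a Dirac measure. By Definition~\ref{def_sps}, this is precisely spontaneous stochasticity for the chosen initial/boundary data.

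The only mildly delicate point I foresee is the justification that projection to a single time slice commutes with the weak limit. This is the standard continuous-mapping argument for weak convergence, and the continuity of $\pi_{\tau_1}$ is immediate from the product topology on $X^{\mathcal{L}}$; no additional machinery from the Appendix is needed for this particular coordinate, since $\tau_1$ corresponds to a concrete point of the lattice rather than to a limiting object.
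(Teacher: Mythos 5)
Your argument is correct and follows the same route the paper intends: the corollary is stated without proof precisely because Theorem~\ref{th3ss} (via Proposition~\ref{prop8stat}, which gives $P^\infty_{\tau_1}=\Psi^\infty$) identifies the $t=\tau_1$ marginal of $\mathcal{P}^\infty$ with $\Psi^\infty(\cdot|a)$, so nontriviality of that kernel forces $\mathcal{P}^\infty$ to be non-deterministic. Your continuous-mapping/pushforward justification of this marginal identification is a careful spelling-out of what the paper leaves implicit, not a different approach.
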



Let $B(\Psi^\infty)$ be a set of probability kernels $\Psi$ such that $\mathfrak{R}_g^N[\Psi] \to \Psi^\infty$, which can be regarded as a basin of attraction.
The immediate consequence is the universality of spontaneously stochastic solutions with respect to regularization: Theorem~\ref{th3ss} is valid and yields the same limiting solution for any stochastic regularization given by $\Psi^{(1)} \in B(\Psi^{\infty})$.  

We see that the phenomenon of spontaneous stochasticity is related to the theory of dynamical systems, which studies convergence to invariant measures at large times \cite{katok1997introduction}. One may expect the convergence $\Psi^{(N)} \to \Psi^{\infty}$ when the deterministic RG dynamics (\ref{eq30}) is chaotic. Another scenario corresponds to multi-stability, when the deterministic RG operator has several attractors. In this case, the limit $\Psi^{\infty}$ may depend on how the initial kernel $\Psi^{(1)}$ is distributed among the basins of those attractors.

\section{Examples of spontaneously stochastic models}\label{sec_sp_ex}

\subsection{Multi-scale interacting phases with expanding couplings}\label{sec8_1}

Let us consider a system from to a class of models introduced in our previous work~\cite{mailybaev2021spontaneously}, for which the inviscid limit can be studied analytically. We take the space $X = \mathbb{S}^1 = \mathbb{R}/\mathbb{Z}$ as the circle group  with operations modulo $1$ and the functions
	\begin{equation}
	\label{eqAC1}
	f({u},{u}') = 2{u}+2{u}' \ \ (\mathrm{mod}\ 1),\quad
	g({u},{u}') \equiv 0,
 	\end{equation}
reducing equations (\ref{eq4}) to the form
	\begin{equation}
	\label{eqAC2}
	{u}_n(t) = 2{u}_n(t-\tau_n)+2{u}_{n+1}(t-\tau_n)\ \ (\mathrm{mod}\ 1).
 	\end{equation}
Thus, the ideal dynamics represents the doubling of phases at each scale and adding the same contribution from next smaller scales; see Fig.~\ref{fig1}.
We note that relation (\ref{eqAC2}) does not involve the scale number $n-1$ and, therefore, the boundary conditions (\ref{eq4b}) can be neglected. We study the inviscid limit $N \to \infty$ for given initial conditions (\ref{eq4c}) and the stochastic regularization (\ref{eqSK1}) with random phases ${x}_m$ taking values in $\mathbb{S}^1$. 

We will show that, because of the expanding coupling, the probability distributions of random phases $x_m$ are pushed-forward by the dynamics to uniform measures on $\mathbb{S}^1$ for most variables $\mathfrak{u}_n^{(N)}(t)$ in the inviscid limit. The exceptions converge to Dirac measures and correspond to the variables $\mathfrak{u}_n^{(N)}(\tau_n)$ at one turn-over time, because these variables do not depend on $x_m$. Thus, the inviscid limit yields the universal (independent of regularization) spontaneously stochastic solution.

\begin{proposition}\label{prop_exS1}
If i.i.d. random variables ${x}_m$ are absolutely continuous, then the stochastically regularized solutions have a weak inviscid limit at any time on the lattice as
	\begin{equation}
	\label{eqAC6}
	\lim_{N \to \infty}{P}^{(N)}_{{t}}(\cdot|a) 
	= {P}^\infty_{{t}}(\cdot|a)
	= \left\{\begin{array}{ll}
	\delta_{f(a_n,a_{n+1})} \times \mu \times \mu \times \cdots,& t  = \tau_n; \\[3pt] 
	\mu \times \mu \times \mu \times \cdots,& \textrm{otherwise};
	\end{array}\right. 
 	\end{equation}
where $\mu$ is a uniform probability measure on $\mathbb{S}^1$. 
\end{proposition}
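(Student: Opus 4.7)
The plan is to identify $\Psi^\infty(\cdot|a)=\delta_{f(a_1,a_2)}\times\mu\times\mu\times\cdots$ as a fixed point of $\mathfrak{R}_g$ and then to prove the convergence $\Psi^{(N)}\to\Psi^\infty$ in the sense of Definition~\ref{st_lim}; the proposition then follows from Theorem~\ref{th3ss} (integer times) and Proposition~\ref{prop8stat} in the Appendix (non-integer lattice times). The key structural fact is that with $f(u,u')=2u+2u'\pmod 1$ linear and $g\equiv 0$, unrolling (\ref{eqAC2}) expresses every $\mathfrak{u}_m^{(N)}(t)$ as a $\mathbb{Z}$-linear combination modulo $1$ of the initial data $a_n$ and of the i.i.d.\ noises $x_k$, with nonnegative integer (power-of-two) coefficients. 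Hence characteristic functions of finite linear combinations of the $\mathfrak{u}_m^{(N)}(t)$ factorise over $\hat{\mu}_x$ by independence of the $x_k$.

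The fixed-point check is immediate from the operator (\ref{eqS7}): the $\Xi$-convolution pins the first coordinate of $\mathfrak{R}_g[\Psi^\infty](\cdot|a)$ to $f(a_1,a_2)$, while the coordinate of index $n\ge 2$ is $V_{n-1}$ with $V\sim\Psi^\infty(\cdot|U)$ and $U\sim\Psi^\infty(\cdot|\sigma_+ a)$ independently. For $n=2$ this is $2U_1+2U_2\pmod 1$ with $U_2\sim\mu$, and for $n\ge 3$ it is directly a uniform marginal of $\Psi^\infty(\cdot|U)$; invariance of $\mu$ under $v\mapsto 2v+c$ makes $V_{n-1}$ uniform in both cases. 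Mutual independence of the output coordinates follows from the product form of $\Psi^\infty$ together with the independence of the two $\Psi^\infty$ instances in $\mathfrak{R}_g$.

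For $\Psi^{(N)}(\cdot|a_N)\to\Psi^\infty(\cdot|a)$, I test against trigonometric cylindrical functions $\varphi(u)=\exp(2\pi i\sum_{j=1}^r k_j u_j)$, $k_j\in\mathbb{Z}$, whose linear span is dense in bounded uniformly continuous cylindrical functions on $(\mathbb{S}^1)^{\mathbb{N}}$ by Stone--Weierstrass. Linearity gives
\begin{equation*}
\int\varphi(u)\,\Psi^{(N)}(du|a_N)
= e^{2\pi i L^{(N)}(k;a_N)}\prod_m \hat{\mu}_x\bigl(D_m^{(N)}(k)\bigr),
\end{equation*}
with integer coefficients $L^{(N)}$ and $D_m^{(N)}$. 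When $k_j=0$ for all $j\ge 2$, equation (\ref{eqAC2}) gives $\mathfrak{u}_1^{(N)}(\tau_1)=2a_{N,1}+2a_{N,2}$ deterministically for $N\ge 2$; all $D_m^{(N)}$ vanish and $L^{(N)}\to 2k_1(a_1+a_2)$ by continuity, matching the fixed-point value. When some $k_j\ne 0$ with $j\ge 2$, let $j_*=\max\{j:k_j\ne 0\}$ and focus on the coefficient $c_j^{(N)}$ of $x_0$ in $\mathfrak{u}_j^{(N)}(\tau_1)$; unrolling the recursion $U_k=2U_{k-1}+2V_{k-1}$ at scales $j,j+1,\ldots,N$ shows that $c_j^{(N)}$ is a (super)exponentially growing positive integer with $c_{j_*}^{(N)}/c_{j_*-1}^{(N)}\to\infty$. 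Hence in the total $x_0$-coefficient $D_0^{(N)}(k)=\sum_{j\le j_*}k_j c_j^{(N)}$, the term $k_{j_*}c_{j_*}^{(N)}$ dominates for large $N$, giving $|D_0^{(N)}(k)|\to\infty$. Riemann--Lebesgue on $\hat{\mu}_x$ (applicable since $\mu_x$ is absolutely continuous) combined with $|\hat{\mu}_x|\le 1$ elsewhere then drives the whole product to $0$, matching $\int\varphi\,\Psi^\infty(du|a)=0$.

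With $\Psi^{(N)}\to\Psi^\infty$ in hand, Theorem~\ref{th3ss} gives $P^{(N)}_t\to P^\infty_t=\Phi^\infty_{b_{t-1}}\circ\cdots\circ\Phi^\infty_{b_0}$ at integer times, and a brief calculation using $\Phi^\infty_b=(\Psi^\infty\circ\Psi^\infty)*B_b$ plus the invariance of $\mu$ yields $\Phi^\infty_b(\cdot|a)=\mu\times\mu\times\cdots$, hence $P^\infty_t(\cdot|a)=\mu\times\mu\times\cdots$ for $t\in\mathbb{N}$, agreeing with (\ref{eqAC6}). For non-integer lattice times $t=m\tau_n$, Proposition~\ref{prop8stat} expresses $P^{(N)}_t$ via compositions of $\Psi^{(N-n+1)},\ldots,\Psi^{(N)}$; the case $t=\tau_n$ isolates the single odd update in (\ref{eqAC2}), producing the deterministic first coordinate $\delta_{f(a_n,a_{n+1})}$. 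The main technical obstacle is the growth/monotonicity estimate on the coefficients $c_j^{(N)}$: explicit small-$N$ computations show that accidental cancellations can zero out specific $D_0^{(N)}(k)$ for isolated values of $N$ (e.g.\ the $x_0$-coefficient in $-2\mathfrak{u}_2^{(3)}(\tau_1)+\mathfrak{u}_3^{(3)}(\tau_1)$ vanishes), but the divergence of $c_{j_*}^{(N)}/c_{j_*-1}^{(N)}$ guarantees that such cancellations occur for only finitely many $N$ for any fixed $k$.
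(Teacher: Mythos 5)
Your proof is correct and follows essentially the same route as the paper's: exploit the linearity of (\ref{eqAC2}) to write each variable as an integer linear combination of the $a_n$ and $x_m$ modulo $1$, invoke the divergence properties of the $x_0$-coefficients (the paper's $p_n^{(N)}$, your $c_j^{(N)}$, taken from Lemma~2 of \cite{mailybaev2021spontaneously}), and conclude via a characteristic-function/Riemann--Lebesgue argument (which the paper delegates to Lemma~1 of the same reference). You simply supply more of the details the paper outsources --- the explicit fixed-point check, the handling of accidental cancellations, and the reduction to general lattice times via Theorem~\ref{th3ss} and Proposition~\ref{prop8stat} --- but the underlying argument is the same.
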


\begin{proof}
This proof largely relies to the results of our previous work~\cite{mailybaev2021spontaneously}, and we present it here in a short form. Since relations (\ref{eqAC2}) are linear, each variable $\mathfrak{u}_n^{(N)}(t)$ of the stochastically regularized solution is a linear combination of a finite number of initial phases $a_n \in \mathbb{S}^1$ and random variables ${x}_m$. For $t = \tau_1$, we have
	\begin{equation}
	\label{eqAC3}
	\mathfrak{u}_1^{(N)}(\tau_1) = f(a_1,a_2),\quad
	\mathfrak{u}_n^{(N)}(\tau_1) = 
	p_n^{(N)}{x}_0+f_n^{(N)}({a},x) \ \ (\mathrm{mod}\ 1), \quad
	n = 2,\ldots,N,
 	\end{equation}
where we separated the contribution of the first random variable ${x}_0$ from contribution of initial conditions ${a}$ and other random variables $x = ({x}_1,{x}_2,\ldots)$. One can show the following

\begin{lemma} 
Coefficients $p_n^{(N)}$ are positive integers with the properties
	\begin{equation}
	\label{eqAC4}
	\lim_{N \to \infty} p_2^{(N)} = \infty,\quad 
	\lim_{N \to \infty} \frac{p_{n+1}^{(N)}}{p_n^{(N)}} = \infty,\quad n = 2,\ldots,N-1.
 	\end{equation}
\end{lemma}
The proof of Lemma is given in \cite[Lemma~2]{mailybaev2021spontaneously}, where we considered a similar problem with $X$ being a torus (two phases).

The measure ${P}^{(N)}_{{\tau_1}}(\cdot|a)$ defines a probability distribution of $\mathfrak{u}^{(N)}(\tau_1)$ for a given initial condition ${a}$. 
In our previous work~\cite[Lemma~1]{mailybaev2021spontaneously}, we used the Fourier expansion technique under a more general setting. This result can be directly applied for  proving the limit (\ref{eqAC6}). 
\end{proof}

By Proposition~\ref{prop_exS1}, the model (\ref{eqAC2}) is spontaneously stochastic and universal: the solution of deterministic ideal system obtained in the inviscid limit is stochastic and independent of distributions of random variables $x_m$. 
We refer an interested reader to our previous work \cite{mailybaev2021spontaneously} for numerical tests of convergence to the spontaneously stochastic limit. In particular, we argued that this convergence is double exponential in $N$ for most of the variables, which means that the spontaneous stochasticity can be observed at moderately large $N$ even when random phases $x_m$ are extremely small. 

One may recognize a similarity between this example and the spontaneous stochasticity in the context of Kelvin--Helmholtz instability studied numerically in \cite{thalabard2020butterfly}. The latter system was considered with infinitesimal small-scale perturbations of initial vorticity. In this analogy, one can interpret our variable ${u}_n(t)$ as representing a deviation of the flow from the stationary solution.

We would also like to mention that dynamical systems on lattices ($\mathbb{Z}$ or $\mathbb{Z}^2$) with expanding couplings, the so-called coupled-map lattices (CML), have been studied previously; see e.g. \cite{bricmont1996high,chazottes2005dynamics}. But our lattice has a different space-time structure: it cannot be seen as a classical dynamical system and, thus, requires new techniques for the analysis. 

\subsection{Spontaneous stochasticity and digital turbulence}\label{sec8_2}

First, let us consider the symbolic model A with functions (\ref{eq5fg}) studied in Sections \ref{sec_symb_ex} and \ref{sec_ex}, but now imposing the stochastic regularization (\ref{eqSK1}). We consider  i.i.d. random variables ${x}_m$ given by the Bernoulli distribution with $p = 1/2$.  Considering two different samples of random variables, we computed the corresponding maps $\psi^{(N)}({a})$ and their Cantor set representations $\psi^{(N)}(x)$, which are presented in Fig.~\ref{fig7} for $N = 2$, $4$ and $6$.  Comparing with Fig.~\ref{fig5} verifies numerically that the stochastic regularization yields the same deterministic solution in the inviscid limit: the limiting probability kernel $\Psi^\infty$ consists of Dirac measures $\Psi^\infty(\cdot|a) = \delta_{\psi^\infty({{a}})}$ with $\psi^\infty$ from (\ref{eq_exf_X1}).

\begin{figure}[tp]
\centering
\includegraphics[width=1\textwidth]{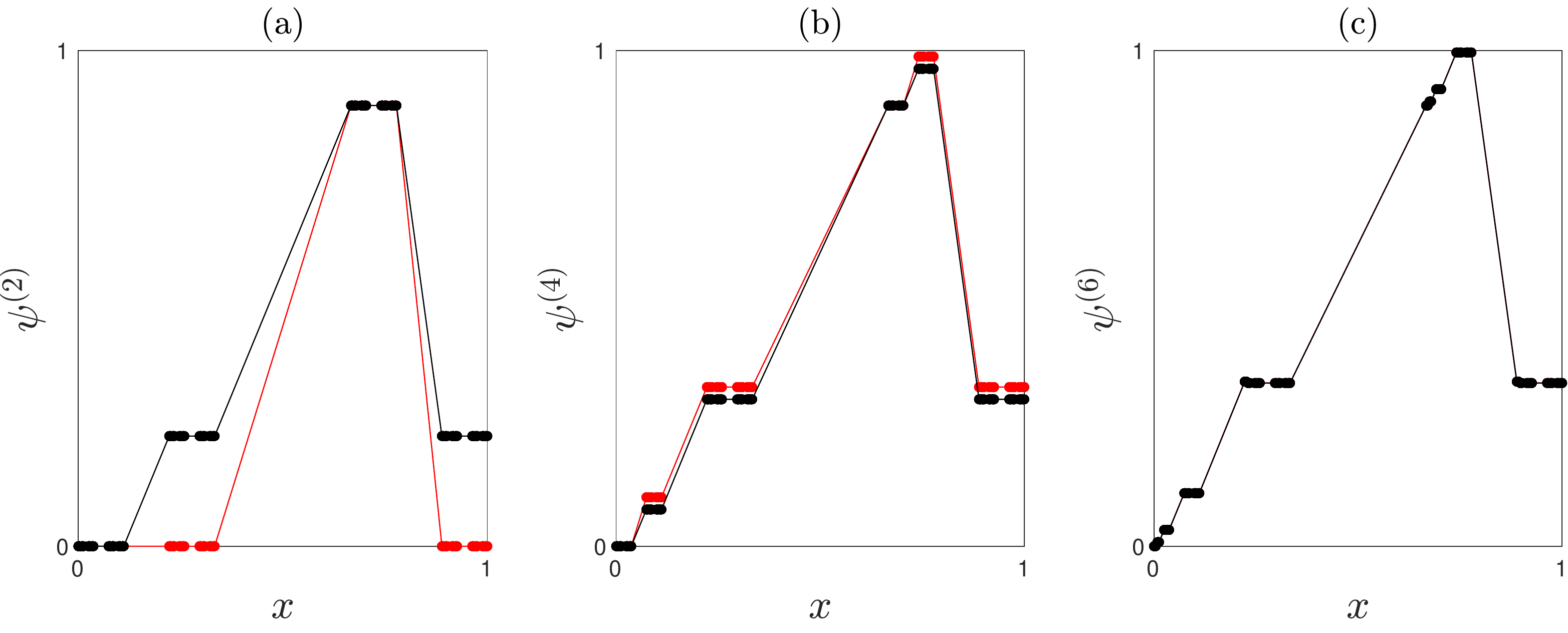}
\caption{Each panel shows two random samples (red and black) of the stochastically regularized map $\psi^{(N)}(x)$ in the Cantor set representation for the symbolic model A: (a) $N =2$, (b) $N = 4$ and (c) $N = 6$. Comparing with Fig.~\ref{fig5}, one observes the convergence to a deterministic limit.}
\label{fig7}
\end{figure}

Let us now study the same stochastic regularization for the symbolic model B with functions (\ref{eq13}). We will see that this model is spontaneous stochastic and demonstrates the ``digital turbulence'': irregular and unpredictable multi-scale dynamics of alternating laminar and turbulent ($0$ and $1$) states.  
Two random samples of maps $\psi^{(N)}(x)$ are shown in Figs.~\ref{fig8}(a,b) for $N = 7$ and $10$. Unlike Fig.~\ref{fig7}, these samples do not converge for large $N$. Therefore, we do not expect a deterministic inviscid limit. The same functions but now for $100$ random samples are shown in Figs.~\ref{fig8}(c,d). These plots characterize the support of the probability measures $\Psi^{(N)}(\cdot|{a})$, and their close similarity for $N = 7$ and $N = 10$ is our first numerical evidence of statistical convergence.

\begin{figure}[tp]
\centering
\includegraphics[width=0.8\textwidth]{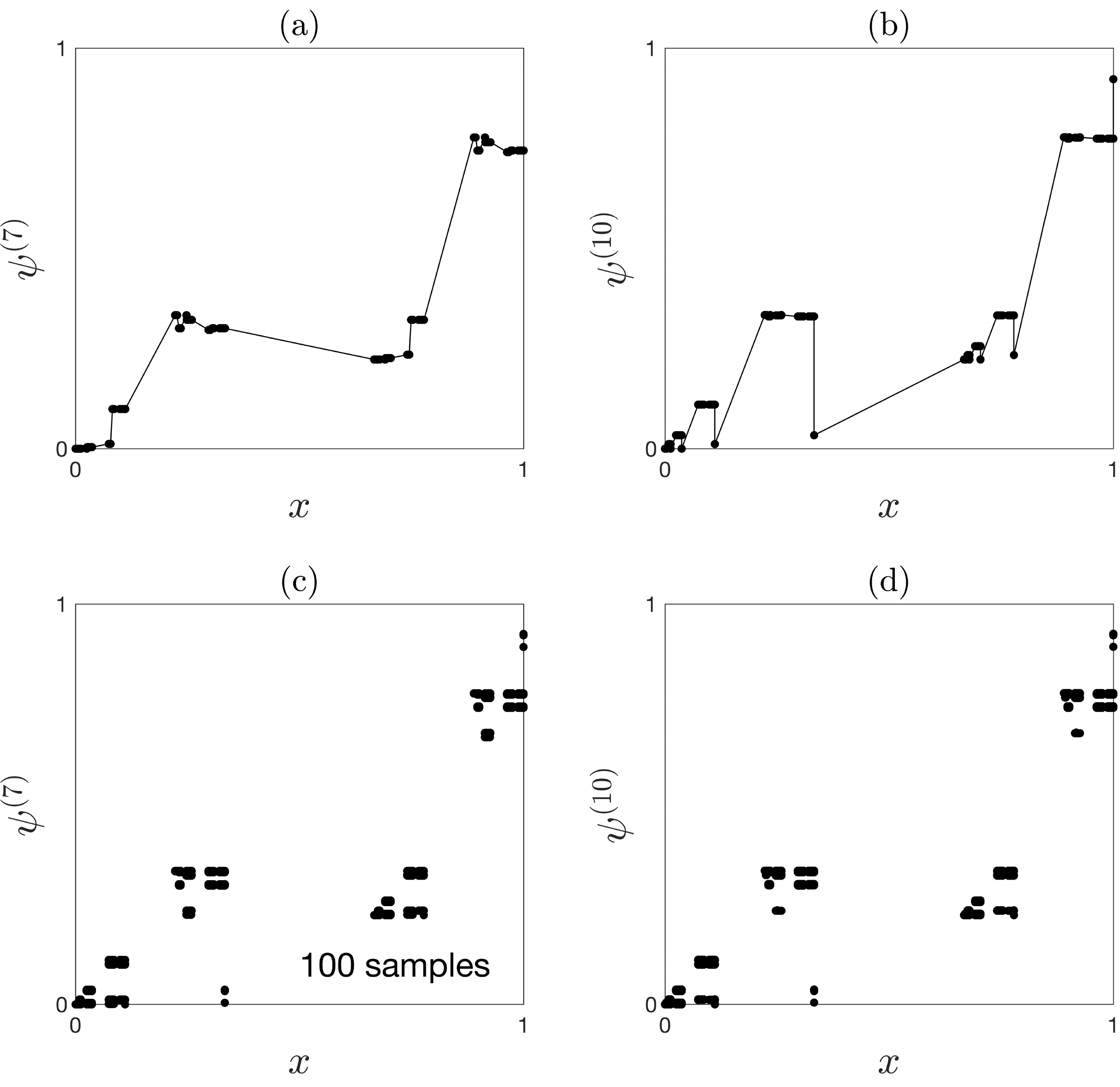}
\caption{Functions $\psi^{(N)}(x)$ in the Cantor set representation for the stochastic regularization of the symbolic model B. Observe that random samples for (a) $N = 7$ and (b) $N = 10$ are different, while one hundred random samples for (c) $N = 7$ and (d) $N = 10$ demonstrate the statistical convergence. }
\label{fig8}
\end{figure}
\begin{figure}[htp]
\centering
\includegraphics[width=0.95\textwidth]{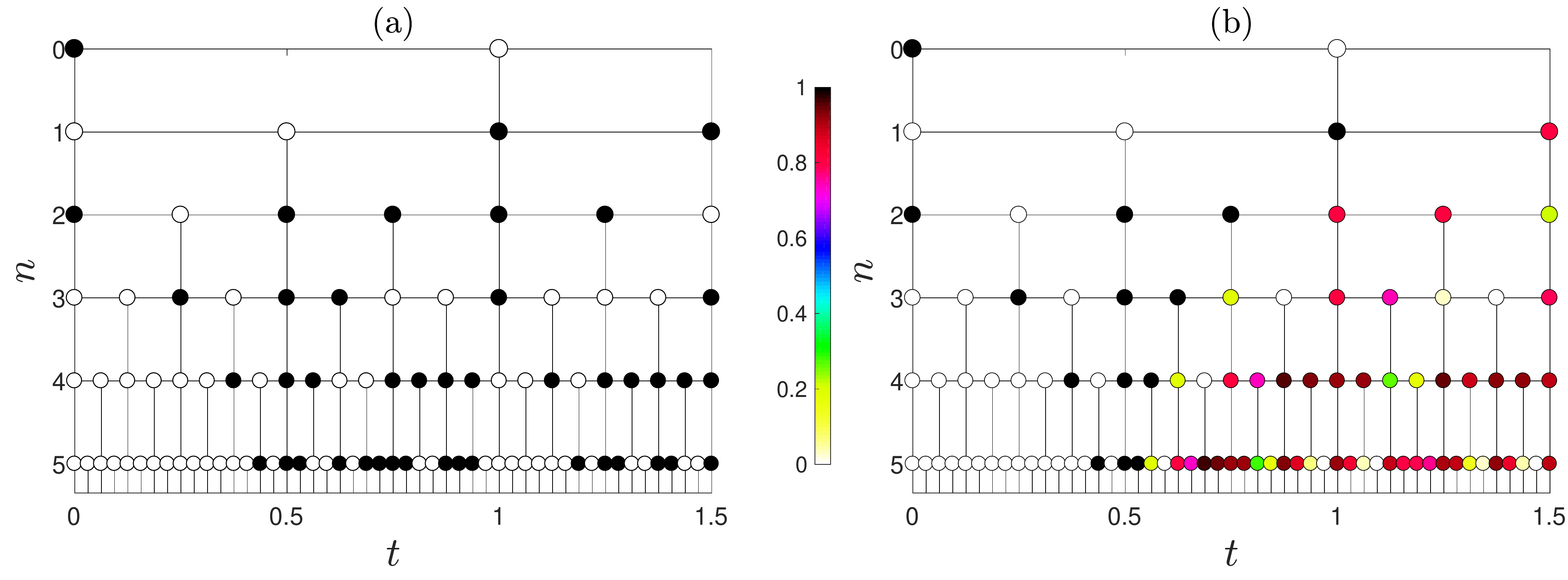}
\caption{Stochastically regularized solutions of the model B with $N = 20$, initial condition ${a} = (0,1,0,0,0,\ldots)$ and boundary conditions $(b_0,b_1) = (1,0)$. (a) A random sample, where black circles correspond to ${u}_n^{(N)}(t) = 1$ and white circles to $0$. (b) Mean values of random variables ${u}_n^{(N)}(t)$. Mean values different from zero or one (white or black) signify the spontaneous stochasticity, i.e., a stochastic form of the corresponding variable.}
\label{fig9}
\end{figure}\begin{figure}[htp]
\centering
\includegraphics[width=0.55\textwidth]{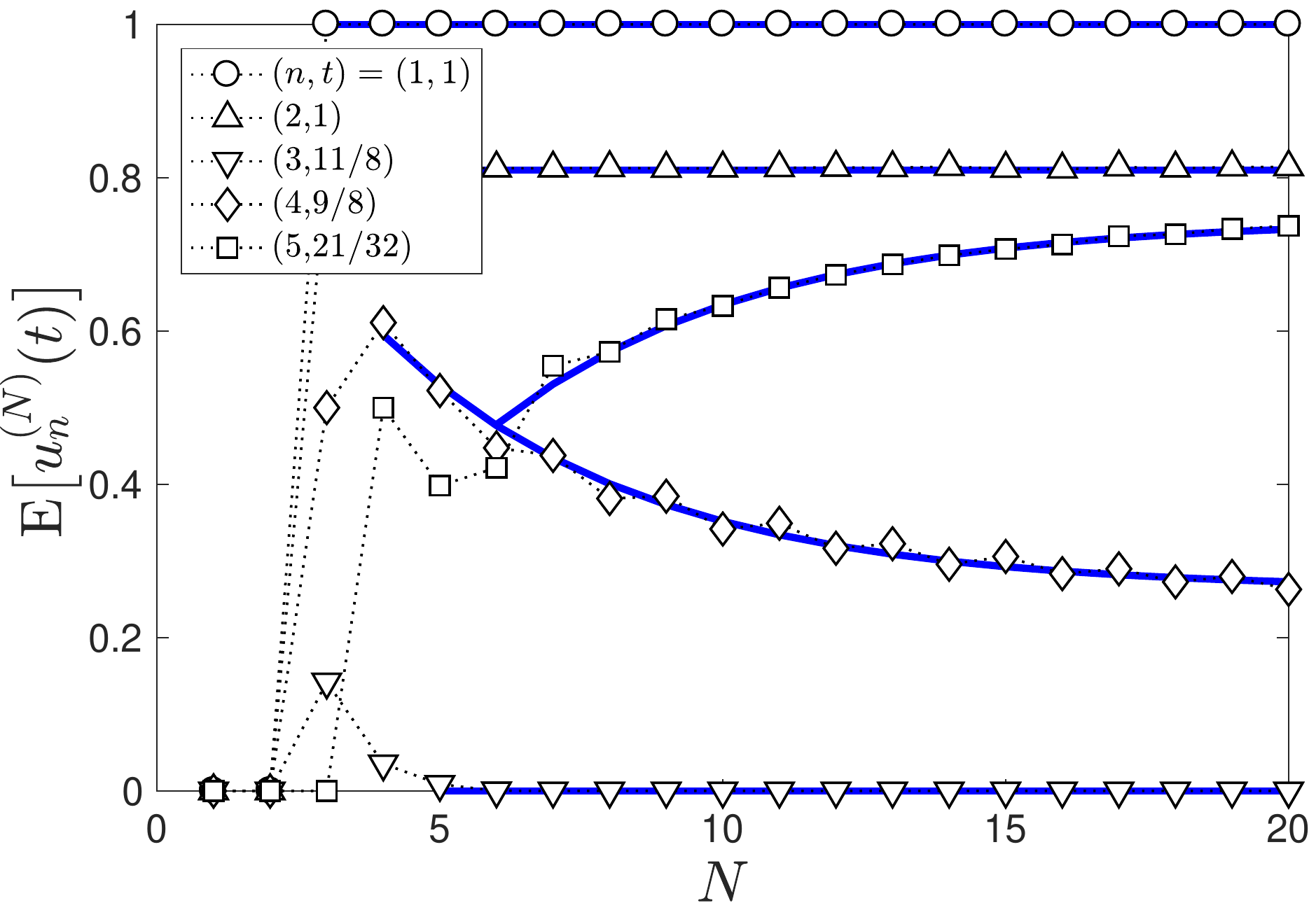}
\caption{Expectations $\mathrm{E}\big[{u}_n^{(N)}(t)\big]$ as functions of the scale number $N$ of the stochastic regularization. The values $(n,t)$ of five selected variables are given in the legend. The results are obtained by averaging with respect to $10^5$ random samples. The blue lines show the exponential convergence to constant values with the same exponent $-0.22$.}
\label{fig10}
\end{figure}

We have no rigorous proof that Theorem~\ref{th3ss} applies in this model. Nevertheless, we performed several numerical tests that verify the  convergence of Theorem~\ref{th3ss} for specific initial and boundary conditions and specific variables ${u}_n(t)$. Figure~\ref{fig9}(a) shows a sample of spontaneously regularized solution in the time interval $0 \le t \le 1.5$, where we took $N = 20$ with the initial conditions ${a} = (0,1,0,0,\ldots)$ and the boundary conditions $(b_0,b_1) = (1,0)$. The second panel of the same plot shows the statistics of such solutions computed by averaging over $10^5$ random samples. Here the color denotes the expectation of the respective variable, which is a probability of having ${u}_n^{(N)}(t) = 1$. The colors other than white or black demonstrate the persistence of the stochastic behavior at large $N$. Existence of the spontaneously stochastic inviscid limit implies that the expectations of ${u}_n^{(N)}(t)$ converge to some values within the interval $[0,1]$ as $N \to \infty$. Next, we verify the rate of convergence with increasing $N$. This is done in Fig.~\ref{fig10} for a selected set of variables, which presents a clear numerical evidence for the convergence to a spontaneously stochastic solution.

As one can infer from Fig.~\ref{fig9}(b), the limiting solution is deterministic at pre-blowup times $t \le T = 0.5$ and becomes spontaneously stochastic at post-blowup times $t > T$.  
This example can be seen as an analogue of spontaneous stochasticity observed in the shell model of turbulence~\cite{mailybaev2016spontaneously,mailybaev2016spontaneous}, where an intrinsically probabilistic solution appears after a finite-time blowup. Numerical tests (not presented here) indicate that the spontaneous stochasticity in our symbolic model is not universal, and the limiting distributions depend on regularization, e.g., on the parameter $p$ of Bernoulli distribution for ${x}_m$. 

\section{Discussion}\label{sec10}

 Motivated by the problem of inviscid limit in fluid dynamics, we considered a class of multi-scale systems with discrete time. These systems model a situation when solutions of the ideal scale-invariant system are non-unique or not globally defined, and the viscous regularization is used for making the system well-posed. We developed the renormalization group (RG) theory for the inviscid limit answering the qualitative questions: why the inviscid limit exists or not, and why it can be universal, i.e., independent of regularization. We associated the inviscid limit with a fixed-point attractor of the RG dynamics, which selects a solution of the ideal system. Universality of this solution follows naturally, because the RG operator (transforming flow maps of regularized systems) depends only on properties of the ideal system, while the form of regularization affects only the initial condition of the RG dynamics. 
 
Our main focus, however, was the situation when the RG attractor is not a fixed point, for example, when the RG dynamics is chaotic. Motivated by microscopic fluctuations, whose importance in turbulence was revealed by Ruelle~\cite{ruelle1979microscopic}, we introduced a small-scale random noise in the viscous range. This yields a stochastic RG operator, which acts on probability (Markov) kernels of stochastically regularized systems. A fixed-point attractor of this RG operator is a probability kernel, which selects a spontaneously stochastic solution of the ideal system in the inviscid limit. This solution is a Markov process solving deterministic equations of the ideal system with deterministic initial and boundary conditions.

Our class of models includes simple symbolic and interacting-phase systems, which we used as solvable examples, but it also refers to realistic systems obtained with a scale-by-scale separation of degrees of freedom~\cite{frisch1999turbulence}. We list below the questions, which we consider important for extending our approach to physical models, e.g., to the Eulerian spontaneous stochasticity of velocity fields observed in the theory of turbulence~\cite{mailybaev2016spontaneously,mailybaev2017toward,biferale2018rayleigh,thalabard2020butterfly}. 
\begin{itemize}
\item[]
\textit{Continuous time:}
how to reformulate the RG theory in terms of infinite-dimensional differential equations with continuous time?
\item[]
\textit{Scale invariance:} we assumed a single space-time scaling symmetry (\ref{eq5}), while the Euler equations of ideal fluid have scaling symmetries (\ref{eq_inv}) with arbitrary exponent $h$, and the intermittent hidden scaling symmetry~\cite{mailybaev2020hidden,mailybaev_thalabard2021hidden}. What is the importance of these symmetries for the inviscid limit?

\item[]
\textit{Conservation laws:}
conserved quantities play important role in the theory of turbulence, e.g., for the dissipative anomaly~\cite{eyink2006onsager}. How these quantities 
can be analysed in the RG theory? 

\item[]
\textit{Dynamics of the RG operator:}
we considered here only two types of explicit models. In the case of coupled expanding maps, our results only hold for linear couplings, and we do not have a general theory in the  case of symbolic models. What can be said about dynamics of the RG operator for these and other scale-invariant models? Other types of dynamical systems questions may involve studying quantities coming from thermodynamic formalism, like entropy and phase transitions. 
We considered only fixed-point attractors of the RG operator. Are there systems with other types of RG attractors? In particular, we observed periodic attractors in symbolic models, and they can be relevant for the inviscid limit in other systems~\cite{mailybaev2016spontaneous,drivas2021life}.  
\end{itemize}

To conclude, it is insightful to identify similarities and differences of our approach with existing RG theories. The existence of a functional RG theory for the inviscid limit of the Navier--Stokes turbulence was anticipated in \cite{eyink1994analogies}, and our work suggests a prototype for such a theory. On one hand, our RG approach resembles the functional RG theory in dynamical systems. In particular, our RG operator contains a rescaled composition of two evolution maps, just as in the Feigenbaum--Cvitanovi\'c functional relation~\cite{feigenbaum1983universal}, and the phenomenon of spontaneous stochasticity is related to a fixed-point attractor of RG dynamics. However, the scaling of our RG operator is in the opposite direction: it zooms out instead of zooming in. This property makes our approach similar to the RG theory for critical phenomena~\cite{wilson1983renormalization}, which studies averaged quantities at increasing observation scales. However, our approach acts in the opposite direction here too: instead of eliminating small degrees of freedom, our RG operator redefines the large-scales evolution while keeping small-scale dynamics intact.

\section{Appendix}

\subsection{Strong and weak solutions in symbolic models}\label{sec3}

In this section we study the initial value problem for a symbolic model with $X = \mathbb{Z} = \{0,1\}$. It is given by relation (\ref{eq4}) on the lattice $(n,t) \in \mathcal{L}$ with initial conditions (\ref{eq4c}) and boundary conditions (\ref{eq4b}).
In partial differential equations, the concept of strong solution refers to functions with sufficient regularity. In particular, it implies a decay of dynamical variables at small scales, e.g., a proper decay of Fourier coefficients for differentiable functions. We extend this concept to our symbolic dynamics by requiring that the variables ${u}_n(t)$ vanish for sufficiently large $n$. Specifically, for any given time $t$ on the lattice, we introduce the leading (possibly infinite) scale number $n_{\max}(t)$ as follows: $n_{\max}(t) = 0$ if all variables ${u}_n(t)$ with $(n,t) \in \mathcal{L}$ vanish at this time; otherwise, 
	\begin{equation}
	\label{eq5b}
	n_{\max}(t) = \max\{n: {u}_n(t) = 1,\ (n,t) \in \mathcal{L}\}.
 	\end{equation}

\begin{definition}
\label{def1}
We call ${u}_n(t)$ a strong solution in a (finite or infinite) interval $t \in [0,T)$ if 
\begin{itemize}
\item[($i$)]
${u}_n(t)$ satisfies equations equations (\ref{eq4})--(\ref{eq4b}) of the ideal system; 
\item[($ii$)] 
$\sup_{t \in I} n_{\max}(t) < \infty$ for any closed interval $I \subset [0,T)$.
\end{itemize}
\end{definition}

Let us denote by $t_{\mathrm{bc}}$ the largest (possibly infinite) time such that $b_t = 0$ for $t < t_{\mathrm{bc}}$. This is the time, at which the boundary condition becomes active. Then, we have the following result:

\begin{proposition}[Strong solutions]
\label{th1}
For any initial and boundary conditions with $n_{\max}(0) < \infty$:
\begin{itemize}
\item[(a)]
If $g(1,0) = 0$ then there exists a unique global-in-time strong solution. 
\item[(b)]
If $g(1,0) = 1$ then there exists a unique strong solution for $0 \le t < T$ with 
	\begin{equation}
	\label{eq5bt}
	T = \left\{\begin{array}{ll}
		t_{\mathrm{bc}}+2, & n_{\max}(0) = 0;\\[3pt]
		2\tau_i, & i = n_{\max}(0) \ge 1.
		\end{array}\right. 
 	\end{equation}
When $T < \infty$, the variables with $n \ge n_{\max}(0)$ and $t < T$ are equal to
	\begin{equation}
	\label{eq5bb}
	{u}_n(t) = \left\{\begin{array}{ll}
		0, & t < T-2\tau_n; \\[2pt]
		1, & t = T-2\tau_n; \\[2pt]
		f(1,0), & t = T-\tau_n.
	\end{array}\right.
 	\end{equation}
Since $n_{\max}(t) = n \to \infty$ as $t = T-2\tau_n \nearrow T$, the strong solution blows up at time $T$. 
\end{itemize}
\end{proposition}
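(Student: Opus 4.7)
The plan is to establish uniqueness by a causal recursion on lattice times, and then split into the two cases: in case (a) I bound the support $n_{\max}(t) \le n_{\max}(0)$ globally, while in case (b) I construct the explicit blowup profile (\ref{eq5bb}) by induction on $n$, from which the blowup time is read off. Equation (\ref{eq4}) expresses $u_n(t)$ in terms of values at scales $n$, $n\pm 1$ at strictly earlier times $t - \tau_n$ and $t - 2\tau_n$. Given a strong solution on a closed interval $I \subset [0,T)$ with bound $M = \sup_{t \in I} n_{\max}(t) < \infty$, one processes lattice points in increasing order of time: at each $t$, the values with $n > M$ vanish by Definition~\ref{def1}(ii), while those with $1 \le n \le M$ are determined from (\ref{eq4}) using $u_0(t) = b_t$, $u_{M+1} \equiv 0$, and previously computed values. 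This recursion both yields uniqueness and provides the construction used for existence, once the case-specific support bounds below are known.

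For part (a), I claim no scale $n > n_{\max}(0)$ can become active. Assuming otherwise, let $j > n_{\max}(0)$ be the first scale to acquire value $1$, at some earliest lattice time $t_j$. Substituting the previously-zero arguments into equation (\ref{eq4}) for $u_j(t_j)$, the $f$-term is $f(0,0) = 0$, and the $g$-term (when present) is $g(u_{j-1}(t_j - 2\tau_j), 0)$, equal to either $g(0,0)$ or $g(1,0)$; both vanish under hypothesis (a). This contradicts $u_j(t_j) = 1$, so $n_{\max}(t) \le n_{\max}(0)$ for all $t$, giving a global strong solution. Boundary forcing of scale $1$ enters only through $g(b_{t-1}, 0) = g(1,0) = 0$ and is handled by the same argument.

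For part (b), set $i = n_{\max}(0)$ (first assuming $i \ge 1$). I construct the solution on $[0,T)$ with $T = 2\tau_i$ using formula (\ref{eq5bb}) at scales $n \ge i$ and the uniqueness recursion at scales $n < i$. The proof of (\ref{eq5bb}) is by induction on $n$, combined with the auxiliary claim that $u_n(t) = 0$ for all lattice times $t < T - 2\tau_n$. The base case $n = i$ is immediate from $u_i(0) = 1$ and $u_i(\tau_i) = f(1,0)$. For the inductive step, using $T - 2\tau_{n+1} = T - \tau_n$, the even-parity equation (\ref{eq4}) at this time reads
\[
u_{n+1}(T - \tau_n) = f\bigl(u_{n+1}(T - 3\tau_{n+1}),\, u_{n+2}(T - 3\tau_{n+1})\bigr) + g\bigl(u_n(T - 2\tau_n),\, u_{n+1}(T - 2\tau_n)\bigr),
\]
where the $f$-arguments vanish (they lie strictly before the activation times of scales $n+1$ and $n+2$), the third argument equals $1$ by the inductive hypothesis, and the fourth equals $0$ by the auxiliary claim. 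Thus $u_{n+1}(T - \tau_n) = g(1,0) = 1$; the odd-parity computation at $t = T - \tau_{n+1}$ analogously gives $u_{n+1}(T - \tau_{n+1}) = f(1, 0)$. The blowup statement follows from $n_{\max}(T - 2\tau_n) = n \to \infty$, and the identity $T = 2\tau_i = \sum_{k \ge i} \tau_k$ reflects the geometric cascade of propagation delays. The remaining case $i = 0$ is reduced to the previous one: the first nonzero boundary value $b_{t_{\mathrm{bc}}} = 1$ triggers $u_1(t_{\mathrm{bc}}+1) = g(1,0) = 1$, starting a cascade with effective $i = 1$ from time $t_{\mathrm{bc}}+1$, so $T = t_{\mathrm{bc}} + 1 + 2\tau_1 = t_{\mathrm{bc}} + 2$.

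The main obstacle is the nested induction supporting the auxiliary vanishing claim in case (b): for each scale $n$ one must verify that no premature activation occurs before $T - 2\tau_n$, which requires a subsidiary induction on lattice times within that scale together with careful tracking of the even/odd parity condition in (\ref{eq4}). The self-similar structure $\tau_{n+1} = \tau_n/2$ aligns activation times across scales and makes the arithmetic clean, but the bookkeeping of which sub-lattice a given time belongs to still requires care.
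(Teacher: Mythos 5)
Your proof is correct and follows essentially the same route as the paper's own (much terser) argument: uniqueness via the finite causal recursion guaranteed by condition ($ii$) of Definition~\ref{def1}, part (a) via the support bound $n_{\max}(t)\le n_{\max}(0)$ forced by $g(1,0)=0$, and part (b) by verifying directly that the explicit profile (\ref{eq5bb}) satisfies equations (\ref{eq4}), with the remaining scales $n<n_{\max}(0)$ filled in by the recursion. The only caveat --- shared with the paper, which leaves it implicit --- is that both arguments use $f(0,0)=g(0,0)=0$ (so the all-zero state is quiescent); also, your ``earliest activation time'' framing in part (a) should be read as induction on the finite recursion depth rather than on the (non-well-ordered) set of lattice times, but for strong solutions this is immediate.
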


\begin{proof}
First, one  can see that the existence of a strong solution implies its uniqueness. Indeed, the property ($ii$) in Definition~\ref{def1} ensures that any variable $u_n(t)$ with $t < T$ is determined by equations (\ref{eq4})--(\ref{eq4b}) in a finite number of iterations. For the global existence in the item (a) it is sufficient to note that relations (\ref{eq4}) with the condition $g(1,0) = 0$ imply that ${u}_n(t) = 0$ for $n > n_{\max}(0)$ at all times. For proving the finite-time existence in the item (b) one can verify that relations (\ref{eq4}) are satisfied for the solution (\ref{eq5bb}) with  $n \ge n_{\max}(0)$ and $t < T$. As a consequence, relations (\ref{eq4}) define iteratively the remaining variables of the strong solution with $n < n_{\max}(0)$ and $t < T$.
\end{proof}


Solutions with blowup from the Proposition~\ref{th1} cannot be extended beyond the time $T$ in a strong sense. For such an extension, we now introduce the notion of weak solutions, which retain only the first condition $(i)$ in Definition~\ref{def1}, thus, allowing for nonzero components at arbitrarily small scales.

\begin{definition}
We call ${u}_n(t)$ a weak solution if it satisfies equations (\ref{eq4})--(\ref{eq4b}) of the ideal system.
\end{definition}

The immediate consequence of Proposition~\ref{th1}(b) is that the strong solution (\ref{eq5bb}) extends as a weak solution to the blowup time $t = T$ by relations (\ref{eq4}).
In fact, we can prove the global-in-time existence of weak solutions. However, we will see later that these solutions may be nonunique.

\begin{proposition}[Weak solutions]
\label{th2}
For any initial and boundary conditions, there exists a global-in-time weak solution.
\end{proposition}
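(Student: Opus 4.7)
The strategy is to construct a weak solution as a subsequential accumulation point of the deterministic regularized solutions $u^{(N)}$ from Definition~\ref{def_det}, exploiting compactness of the state space in the symbolic setting $X = \{0,1\}$.

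Since $X$ is finite, the lattice product space $X^{\mathcal{L}}$ equipped with the product topology is compact by Tychonoff's theorem. Because $\mathcal{L}$ is countable, $X^{\mathcal{L}}$ is moreover sequentially compact (this can also be seen directly by a diagonal extraction: enumerate the lattice points and, at each step, pass to a further subsequence on which the value at the current point stabilizes in $\{0,1\}$). Applying this to the sequence $\{u^{(N)}\}_{N \in \mathbb{N}} \subset X^{\mathcal{L}}$, I extract a pointwise-convergent subsequence $u^{(N_k)} \to u^\infty$ in $X^{\mathcal{L}}$.

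It remains to verify that $u^\infty$ satisfies the ideal system (\ref{eq4})--(\ref{eq4b}). The initial conditions (\ref{eq4c}) and boundary conditions (\ref{eq4b}) hold for every $u^{(N)}$ and hence for $u^\infty$. For equation (\ref{eq4}) at a fixed lattice point $(n, t) \in \mathcal{L}$, observe that for all $N_k \ge n+1$ the regularization cutoff (\ref{eq14}) does not override any of the (at most four) values $u_{n-1}^{(N_k)}(t-2\tau_n)$, $u_n^{(N_k)}(t-2\tau_n)$, $u_n^{(N_k)}(t-\tau_n)$, $u_{n+1}^{(N_k)}(t-\tau_n)$ appearing in that equation, since all involved scales lie in $\{n-1, n, n+1\} \subset \{1,\dots,N_k\}$. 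By Definition~\ref{def_det}, $u^{(N_k)}$ then satisfies (\ref{eq4}) at $(n,t)$ with the genuine dynamical values. Passing to the limit $k \to \infty$ using pointwise convergence and the (automatic, on a discrete space) continuity of $f$ and $g$ yields (\ref{eq4}) for $u^\infty$ at $(n,t)$.

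There is no substantive obstacle: the argument is a clean compactness extraction analogous to constructions of weak solutions in PDE theory. The only bookkeeping is ensuring $N_k \ge n+1$ for each fixed lattice point, which is automatic in the limit since (\ref{eq4}) couples only scales $\{n-1, n, n+1\}$. The construction deliberately does not yield uniqueness: different subsequences---or different regularizations of Definition~\ref{def_det}, as generalized in (\ref{eq14_gen0})--(\ref{eq19b_gen})---may converge to distinct weak solutions, which is precisely the non-uniqueness phenomenon that motivates the RG selection principle developed in the main text.
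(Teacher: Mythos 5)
Your proof is correct and follows essentially the same route as the paper's: both extract a pointwise-convergent subsequence of the cutoff-regularized solutions using sequential compactness of $X^{\mathcal{L}}$ in the product topology and then pass to the limit in the (finitely many, eventually unregularized) relations defining each lattice value. Your additional bookkeeping that equation (\ref{eq4}) at a fixed $(n,t)$ is satisfied by $u^{(N_k)}$ once $N_k$ exceeds the scales involved is exactly the observation the paper leaves implicit.
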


\begin{proof}
Given arbitrary $N \in \mathbb{N}$, let us introduce a regularized solution ${u}_n^{(N)}(t)$, which satisfies relation (\ref{eq4}) with the initial and boundary conditions for $n \le N$, while all components ${u}^{(N)}_n(t) = 0$ for $n > N$. One can see that such solution is uniquely determined for all times $t \ge 0$. Let us consider a discrete product topology in the space of solutions $\left\{{u}_n(t)\right\}_{(n,t) \in \mathcal{L}} \in X^\mathcal{L}$. The sequential compactness property (see~\cite[Theorem 3.10.35]{engelking1989general}) implies that there exists a subsequence $N_1 < N_2 < N_3 < \cdots$ such that the regularized solutions converge. Here, the convergence implies the (pointwize) convergence for every variable ${u}_n^{(N)}(t)$. The limiting solution satisfies  relation (\ref{eq4}) with initial and boundary conditions at all scales and times.
\end{proof}

In the next example we demonstrate the non-uniqueness of weak solutions, which is triggered by a finite-time blowup.

\begin{proposition}
\label{prop2}
Consider the model B with functions (\ref{eq13}).
For arbitrary boundary and initial conditions from Proposition~\ref{th1}(b) with the blowup time $T < \infty$, the weak solution is unique at times $0 \le t \le T$ and non-unique at $t > T$.
\end{proposition}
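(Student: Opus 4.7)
For uniqueness on $[0,T]$, I would show that any weak solution $u$ coincides with the strong solution $u^s$ on every lattice point with $t\le T$ (extending $u^s$ at $t=T$ by the equations~(\ref{eq4})). The Boolean structure of model~B, $f(x,y)=xy$ and $g(x,y)=x+y$ in $\mathbb{Z}_2$, makes $f$ zero-absorbing while $g(0,0)=0$. Combined with $a_n=0$ for $n>i:=n_{\max}(0)$, I would induct on $n\ge i$ (with an inner induction on $m$) to establish the strict pre-wave-front vanishing $u_n(m\tau_n)=0$ whenever $m\tau_n<T-2\tau_n$. The base $m=1$ is immediate from $u_n(\tau_n)=f(a_n,a_{n+1})=0$. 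In the inductive step, the inner hypothesis gives $u_n((m-1)\tau_n)=0$, so the $f$-contribution in~(\ref{eq4}) vanishes regardless of the value at scale $n+1$; the $g$-contribution vanishes because both of its arguments, at scales $n-1$ and $n$ and time $(m-2)\tau_n$, lie in the regime already covered by the outer or inner induction. Once this is in hand, direct evaluation of~(\ref{eq4}) yields $u_n(T-2\tau_n)=g(1,0)=1$, then $u_n(T-\tau_n)=f(1,0)=0$, and finally $u_n(T)=f(0,1)+g(0,1)=1$ for $n\ge i+1$ (at $n=i$ one gets $u_i(T)=a_{i-1}+1$). For $n<i$ the values are recovered recursively from the already-determined larger-scale values.

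For non-uniqueness at $t>T$, I would exhibit two weak solutions agreeing on $[0,T]$ but differing at some lattice point with $t>T$. Setting $v_n:=u_n(T+2\tau_n)$ for $n\ge i+1$ and using the just-established $u_m(T)=1$ for $m\ge i$, one has $u_n(T+\tau_n)=f(1,1)=1$, $u_{n+1}(T+\tau_n)=u_{n+1}(T+2\tau_{n+1})=v_{n+1}$, and $g(u_{n-1}(T),u_n(T))=g(1,1)=0$. The even case of~(\ref{eq4}) at $(n,T+2\tau_n)$ then collapses, via $f(1,y)=y$, to
\begin{equation*}
v_n = v_{n+1}, \qquad n\ge i+1,
\end{equation*}
carrying no condition at $n\to\infty$. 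Hence the common value $c\in\{0,1\}$ is a free parameter. The standard cutoff regularization of Definition~\ref{def_det} yields the limit with $c=0$ (the enforced zero at the edge $n=N+1$ cascades upward through repeated $f(\cdot,0)=0$), while the modified regularization prescribing $u^{(N)}_{N+1}(t)=1$ for $t>0$ yields the limit with $c=1$ (each step is $f(1,1)=1$); both limits are weak solutions by the compactness argument of Proposition~\ref{th2}, and they differ at, e.g., $(i+1,T+2\tau_{i+1})$.

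The main obstacle will be the uniqueness on $[0,T]$: the multi-scale lattice is not well-ordered by time, so a naive recursion through~(\ref{eq4}) descends indefinitely into finer scales and never reaches the initial conditions. My approach succeeds only because the zero-absorbing property of $f$ prunes the recursion: once $u_n((m-1)\tau_n)=0$ is known, the $f$-term is determined without any appeal to scale $n+1$, so the induction at scale $n$ closes using only scales $n$ and $n-1$. Handling the three transition cases at $(n,T-2\tau_n)$, $(n,T-\tau_n)$, and $(n,T)$ by direct computation is where the remaining bookkeeping concentrates.
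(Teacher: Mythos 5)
Your uniqueness argument on $[0,T]$ is sound and is essentially the paper's: the paper phrases the same structural fact contrapositively (if $u_n(t)=1$, then an earlier neighbour at scale $n$ or $n-1$ is nonzero, so any nonzero value traces back to a nonzero initial or boundary datum), whereas you run it forward as a double induction, using the zero-absorbing property $f(0,\cdot)=0$ to decouple scale $n$ from scale $n+1$. The identity $v_n=v_{n+1}$ for $v_n=u_n(T+2\tau_n)$ is a nice way to make explicit the post-blowup degeneracy that the paper leaves implicit, and your overall strategy for non-uniqueness (two regularizations plus the compactness argument of Proposition~\ref{th2}) is also the paper's.

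The gap is in your realization of $c=1$. With the regularization $u^{(N)}_{N+1}(t)=1$ for $t>0$, the wavefront at scale $N$ is corrupted already before time $T$: one still gets $u^{(N)}_N(T-2\tau_N)=1$, but then $u^{(N)}_N(T-\tau_N)=f\bigl(1,u^{(N)}_{N+1}(T-2\tau_N)\bigr)=f(1,1)=1$ instead of $0$, and hence
\begin{equation*}
u^{(N)}_N(T)=f(1,1)+g\bigl(u^{(N)}_{N-1}(T-\tau_{N-1}),1\bigr)=1+g(0,1)=0,
\end{equation*}
not $1$; your parenthetical ``each step is $f(1,1)=1$'' ignores the $g$-term, which is what flips this value. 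The propagation $v_n=v_{n+1}$ at $n=N-1$ needs $u^{(N)}_N(T)=1$ to produce $u^{(N)}_{N-1}(T+\tau_{N-1})=f(1,1)=1$; since instead $u^{(N)}_{N-1}(T+\tau_{N-1})=f(1,0)=0$, one computes $v^{(N)}_{N-1}=f(0,v^{(N)}_N)+g(1,1)=0$ and then $v^{(N)}_n=f(1,v^{(N)}_{n+1})=v^{(N)}_{n+1}=0$ for all $n\le N-2$ --- exactly the values produced by the plain cutoff. So at every fixed lattice point $(n,T+2\tau_n)$ your two limits coincide, and the claimed difference at $(i+1,T+2\tau_{i+1})$ does not materialize. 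The paper avoids this by perturbing only two entries, $u^{(N)}_{N+1}(T)=u^{(N)}_{N+1}(T+\tau_N)=1$, which leaves the dynamics at scales $\le N$ untouched for $t\le T$ and then seeds $u^{(N)}_N(T+\tau_N)=f(1,1)=1$ and $v^{(N)}_N=f(1,1)+g(1,1)=1$, after which your identity $v_n=v_{n+1}$ carries the value $1$ to every fixed scale, giving (\ref{eq13q}). Substituting that surgical regularization for yours repairs the proof; as written, the non-uniqueness half does not go through.
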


\begin{proof}
By inspecting relations (\ref{eq4}) with functions (\ref{eq13}) one can see that: if ${u}_n(t) = 1$, then ${u}'_n = 1$ for odd $t/\tau_n$ and at least one of variables ${u}'_n$, ${u}''_n$ or ${u}''_{n-1}$ is nonzero for even $t/\tau_n$. Using this property iteratively, one can infer that all elements ${u}_n(t)$ of the weak solution vanish for $n > n_{\max}(0)$ and $t < T-2\tau_n$: otherwise, it would contradict to the vanishing initial values ${u}_n(0) = 0$ for $n > n_{\max}(0)$ or the vanishing boundary values $b_t = 0$ for $t < t_{\mathrm{bc}}$.
Then, the remaining variables for $n > n_{\max}(0)$ and $T-2\tau_n \le t \le T$ are defined uniquely by relations (\ref{eq4}); see the proof of Proposition~\ref{th1}(b). The remaining variables with $n \le n_{\max}(0)$ are defined by  (\ref{eq4}). Hence, we proved the uniqueness of the weak solution at pre-blowup times $t \le T$, which coincides with the strong solution at $t < T$.

Our proof of nonuniqueness at $t > T$ is similar to the proof of existence in Proposition~\ref{th2}: we consider solutions ${u}_n^{(N)}(t)$ with two different regularizations and show that they yield different weak solutions in a subsequence limit. First, we choose $N > n_{\max}(0)$ and set all variables ${u}_n^{(N)}(t) = 0$ for $n > N$. One can verify using (\ref{eq4}) and (\ref{eq13}) that relations (\ref{eq5bb}) are valid for $n_{\max}(0) \le n \le N$ and we have 
	\begin{equation}
	\label{eq13p}
	{u}_n^{(N)}(T) = {u}_n^{(N)}(T+\tau_n) = 1,\quad {u}_n^{(N)}(T+2\tau_n) = 0 \quad \textrm{for} \quad n_{\max}(0)+1 < n < N.
 	\end{equation}
As in the proof of Theorem~\ref{th2}, one can choose a convergent subsequence $N_1 < N_2 < \cdots$ providing a weak solution with the property (\ref{eq13p}) for all $n > n_{\max}(0)+1$. Now we choose a different regularization by setting all variables ${u}_n^{(N)}(t) = 0$ for $n > N$ except for ${u}_{N+1}^{(N)}(T) = {u}_{N+1}^{(N)}(T+\tau_{N}) = 1$. One can verify using (\ref{eq4}) and (\ref{eq13}) that relations (\ref{eq5bb}) are valid for $n_{\max}(0) \le n \le N$ and we have 
	\begin{equation}
	\label{eq13q}
	{u}_n^{(N)}(T) = {u}_n^{(N)}(T+\tau_n) = {u}_n^{(N)}(T+2\tau_n) = 1 \quad \textrm{for} \quad n_{\max}(0)+1 < n \le N.
 	\end{equation}
Taking a convergent subsequence, we find a weak solution with the property (\ref{eq13q}) for all $n > n_{\max}(0)+1$. Since the values in (\ref{eq13p}) and (\ref{eq13q}) are different, the two limits yield different weak solutions. 
\end{proof}

\subsection{Solutions at fractional times} \label{sec_fract}

Let us describe regularized solutions and their inviscid limit (\ref{eq32}) at non-integer times on the lattice. Separating the integer and fractional parts, we represent any time on the lattice $\mathcal{L}$ uniquely as
	\begin{equation}
	\label{eq36}
	t = i+\tau_{n_1}+\cdots+\tau_{n_c},
 	\end{equation}
where $i \in \mathbb{N}_0$ and $1 < n_1 < \ldots < n_c$ are increasing scale numbers. 
One can see from Fig.~\ref{fig1} that the variables ${u}_n(t)$ are given by the scale numbers starting with $n = n_c$. Hence, it is convenient to define the corresponding state at time $t$ as the sequence 
	\begin{equation}
	\label{eq16b}
	{{u}}(t) = \left({u}_{n_c}(t),{u}_{n_c+1}(t),{u}_{n_c+2}(t),\ldots\right) \in {X^{\mathbb{N}}},
 	\end{equation}
omitting non-existing components. For example, for $t = 2.625 = 2+\tau_1+\tau_3$, one has ${{u}}(t) = \left({u}_3(t),{u}_4(t),{u}_5(t),\ldots\right)$. 

\begin{proposition}\label{prop8}
At non-integer time (\ref{eq36}), the regularized solution satisfies the following iterative relation
	\begin{equation}
	\label{eqA2_2b}
	{{u}}^{(N)}(t) = 
	\left\{\begin{array}{ll}
	\psi^{(N-n_1+1)} \circ \sigma_+^{n_1-1}\left({{u}}^{(N)}(i)\right), & c = 1; \\[5pt]
	\psi^{(N-n_c+1)} \circ \sigma_+^{n_c-n_{c-1}}\left({{u}}^{(N)}(t-\tau_{n_c})\right), & c \ge 2.
	\end{array}
	\right.
 	\end{equation}
For the inviscid-limit solution from Theorem~\ref{th3}, analogous relation reads
	\begin{equation}
	\label{eqA2_inv}
	{{u}}^\infty(t) = 
	\left\{\begin{array}{ll}
	\psi^\infty \circ \sigma_+^{n_1-1}\left({{u}}^\infty(i)\right), & c = 1; \\[5pt]
	\psi^\infty \circ \sigma_+^{n_c-n_{c-1}}\left({{u}}^\infty(t-\tau_{n_c})\right), & c \ge 2.
	\end{array}
	\right.
 	\end{equation}
\end{proposition}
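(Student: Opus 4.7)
The plan is to apply the scaling symmetry (\ref{eq5reg}) iterated $n_c - 1$ times, so that the step from lattice time $t - \tau_{n_c}$ to $t$ in the $(N)$-regularized system becomes a half-time step (in the sense of $\psi$) in a suitably rescaled $(M)$-regularized system with $M := N - n_c + 1$. First I would verify by induction on $m$, starting from the single-step relation $\tilde u^{(N)}_n(t) = u^{(N+1)}_{n+1}(t/2)$ of (\ref{eq5N}), that $\tilde u^{(N)}_n(t) := u^{(N+m)}_{n+m}(t/2^m)$ solves the $(N)$-regularized equations for every $m \in \mathbb{N}$. With $m := n_c - 1$ this reads, for all $k \geq n_c$,
\begin{equation}
\label{eq_plan_rs}
u^{(N)}_k(t) = \tilde u^{(M)}_{k - n_c + 1}\bigl(2^{n_c - 1}\, t\bigr),
\end{equation}
where $\tilde u^{(M)}$ solves the $(M)$-regularized system.

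I would then inspect the representation (\ref{eq36}) to verify that $2^{n_c - 1}(t - \tau_{n_c})$ is an integer: since $n_j \leq n_{c-1} < n_c$ for every $j < c$, each exponent $n_c - 1 - n_j$ is non-negative, so $2^{n_c - 1} t = 2^{n_c - 1} i + \sum_{j=1}^{c-1} 2^{n_c - 1 - n_j} + \tau_1$ decomposes as that integer plus $\tau_1$. Hence in the rescaled system the step in question is from an integer time to a half-integer time, and is realised by exactly one application of $\psi^{(M)}$. Translating back to original variables via (\ref{eq_plan_rs}) at time $t - \tau_{n_c}$ and the convention (\ref{eq16b}) then yields (\ref{eqA2_2b}): for $c \geq 2$, $u^{(N)}(t - \tau_{n_c})$ starts at scale $n_{c-1}$, so extracting the components starting from scale $n_c$ is the shift $\sigma_+^{n_c - n_{c-1}}$; for $c = 1$, the preceding time is the integer $i$ whose state starts at scale 1, so the shift is $\sigma_+^{n_1 - 1}$.

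To obtain (\ref{eqA2_inv}) I would argue by induction on the number $c$ of fractional terms in (\ref{eq36}), with the base case $c = 0$ (integer times) supplied by Theorem~\ref{th3}. The induction hypothesis yields the product-topology convergence $u^{(N)}(t - \tau_{n_c}) \to u^\infty(t - \tau_{n_c})$, and continuity of $\sigma_+$ gives $a_N := \sigma_+^{n_c - n_{c-1}}(u^{(N)}(t - \tau_{n_c})) \to a := \sigma_+^{n_c - n_{c-1}}(u^\infty(t - \tau_{n_c}))$ (with the analogous statement for $c = 1$ using $u^\infty(i)$). Applying Definition~\ref{def_lim} along this convergent sequence to $\psi^{(M)} \to \psi^\infty$ as $M = N - n_c + 1 \to \infty$ and combining with (\ref{eqA2_2b}) yields $u^{(N)}(t) \to \psi^\infty(a)$, which is (\ref{eqA2_inv}). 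The main obstacle is the careful bookkeeping of the three coordinated shifts in the iterated scaling: the spatial shift of scale indices by $n_c - 1$, the dilation of time by $2^{n_c - 1}$, and the reduction of the regularization level from $N$ to $M$. Once the combinatorial check on (\ref{eq36}) confirms that these align so that $t - \tau_{n_c} \mapsto t$ becomes a half-time step from an integer time in the $(M)$-regularized system, the deterministic identity (\ref{eqA2_2b}) follows by substitution, and its inviscid-limit counterpart (\ref{eqA2_inv}) is a direct appeal to Definition~\ref{def_lim}.
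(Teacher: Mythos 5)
Your proposal is correct and follows essentially the same route as the paper: iterate the scaling symmetry (\ref{eq5N})/(\ref{eq5reg}) to turn the fractional step $t-\tau_{n_c}\mapsto t$ into a half-time step of the $(N-n_c+1)$-regularized system, insert the shift $\sigma_+^{n_c-n_{c-1}}$ (or $\sigma_+^{n_1-1}$) to reconcile the convention (\ref{eq16b}), and pass to the limit via Definition~\ref{def_lim}. Your write-up is somewhat more explicit than the paper's (the integer-time check on $2^{n_c-1}(t-\tau_{n_c})$ and the induction on $c$ for the inviscid limit), but the underlying argument is the same.
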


\begin{proof}
At time $\tau_1$, expression (\ref{eq19flowA}) yields 
	\begin{equation}
	\label{eqA2_1y}
	{{u}}^{(N)}(\tau_1) = \psi^{(N)}({a}). 
 	\end{equation}
At time $\tau_2$, we find the regularized solution using the scaling representation (\ref{eq5N}) with $t = \tau_1$, $\tilde{{a}} = \sigma_+({a})$ and $N-1$ as 
	\begin{equation}
	\label{eqA2_1}
	\begin{array}{rl}
	{{u}}^{(N)}(\tau_2) 
	= & ({{u}}_2^{(N)}(\tau_2),{{u}}_3^{(N)}(\tau_2),\ldots)
	= (\tilde{{u}}_1^{(N-1)}(\tau_1),{{u}}_2^{(N-1)}(\tau_1),\ldots)
	\\[5pt]
	= & \tilde{{{u}}}^{(N-1)}(\tau_1) = \psi^{(N-1)} (\tilde{{a}}) = \psi^{(N-1)} \circ \sigma_+({a}).
	\end{array}
 	\end{equation}
Similarly, for a general turn-over time $\tau_n$ one derives 
	\begin{equation}
	\label{eqA2_2}
	{{u}}^{(N)}(\tau_n) = \psi^{(N-n+1)} \circ \sigma_+^{n-1}({a}).
 	\end{equation}
One can see that the same relation (\ref{eqA2_2}) can be used at every step $\tau_n$ in the expansion (\ref{eq36}) after compensating the missing initial components in (\ref{eq16b}) with an extra shift map. This yields relation (\ref{eqA2_2b}). Relations (\ref{eqA2_inv}) follow from (\ref{eqA2_2b}) by taking the inviscid limit $N \to \infty$.
\end{proof}

Relations (\ref{eqA2_2b}) and (\ref{eqA2_inv}) combined with (\ref{eq19flowAtB}) and (\ref{eqKn4}) determine the regularized and inviscid-limit solutions iteratively at arbitrary time on the lattice. For example, for $t = 2+\tau_1+\tau_3  = 2.625$, we find 
	\begin{equation}
	\label{eqFM2}
	{{u}}^{(N)}(t) = 
	\psi^{(N-2)} \circ \sigma_+^2 \circ \psi^{(N)} 
	\circ \phi^{(N)}_{b_1} \circ \phi^{(N)}_{b_0}({a}).
 	\end{equation}
	\begin{equation}
	\label{eqFM2inf}
	{{u}}^\infty(t) = 
	\psi^\infty \circ \sigma_+^2 \circ \psi^\infty 
	\circ \phi^\infty_{b_1} \circ \phi^\infty_{b_0}({a}).
 	\end{equation}

For the stochastically regularized solution $\mathfrak{{u}}^{(N)}(t)$, maps are replaced by respective probability kernels. Then, compositions and sums of maps become with compositions and convolutions of kernels. Hence, as a consequence of Proposition~\ref{prop8}, we have

\begin{proposition}\label{prop8stat}
For the stochastically regularized solutions at non-integer times (\ref{eq36}), the following iterative relations hold
	\begin{equation}
	\label{eqA2_2bst}
	{P}^{(N)}_t = 
	\left\{\begin{array}{ll}
	\Psi^{(N-n_1+1)} \circ \Sigma_+^{n_1-1} \circ {{P}}^{(N)}_i, & c = 1; \\[5pt]
	\Psi^{(N-n_c+1)} \circ \Sigma_+^{n_c-n_{c-1}} \circ {{P}}^{(N)}_{t-\tau_{n_c}}, & c \ge 2.
	\end{array}
	\right.
 	\end{equation}
For the inviscid-limit solution from Theorem~\ref{th3ss}, analogous relation reads
	\begin{equation}
	\label{eqA2_2bstin}
	{P}^{\infty}_t = 
	\left\{\begin{array}{ll}
	\Psi^{\infty} \circ \Sigma_+^{n_1-1} \circ {{P}}^{\infty}_i, & c = 1; \\[5pt]
	\Psi^{\infty} \circ \Sigma_+^{n_c-n_{c-1}} \circ {{P}}^{\infty}_{t-\tau_{n_c}}, & c \ge 2.
	\end{array}
	\right.
 	\end{equation}
\end{proposition}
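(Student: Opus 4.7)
The plan is to mirror the proof of Proposition~\ref{prop8} step by step, replacing each operation on deterministic maps by its probabilistic counterpart: composition of maps becomes kernel composition via (\ref{eqSK3}), additive shifts become convolutions via (\ref{eqSK5}), the scaling maps $\sigma_\pm$ become the Dirac kernels $\Sigma_\pm$, and the deterministic flow maps $\psi^{(N)}$ and $\phi^{(N)}_b$ become the kernels $\Psi^{(N)}$ and $\Phi^{(N)}_b$. The structural fact that makes this translation legitimate, already used in the proof of Proposition~\ref{prop6}, is that the driving noise $x_m$ at (\ref{eqSK1}) consists of i.i.d.\ variables indexed by distinct time sites, so each half-step evolution is statistically independent of the previous history. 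This is precisely the independence assumption implicit in the kernel composition (\ref{eqSK3}).

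First I would establish the kernel analogue of the single half-step formula (\ref{eqA2_2}). Using the scaling symmetry (\ref{eq5reg}), which was already shown in the proof of Proposition~\ref{prop6} to hold verbatim for the random variables $\mathfrak{u}^{(N)}_n(t)$ of the stochastically regularized system, one obtains ${P}^{(N)}_{\tau_n} = \Psi^{(N-n+1)} \circ \Sigma_+^{n-1}$ as the probabilistic image of (\ref{eqA2_2}), because the $(n-1)$-fold shift $\Sigma_+$ compensates the missing leading components in (\ref{eq16b}), and the reduced half-step kernel on scales $\ge n$ is by scale invariance equal to the half-step kernel $\Psi^{(N-n+1)}$ of a system regularized at $N-n+1$. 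Then I would iterate: at each incremental step $t-\tau_{n_c} \mapsto t$ in the decomposition (\ref{eq36}), only random variables $x_m$ associated with time indices not appearing in the construction of $\mathfrak{u}^{(N)}(t-\tau_{n_c})$ enter the transition; the Markov property therefore holds and the kernel composition (\ref{eqSK3}) correctly represents the joint law. This yields (\ref{eqA2_2bst}) directly, and the case $c=1$ is just the first iteration after an integer time step, whose kernel $P^{(N)}_i$ was already identified in (\ref{eqKn2}) via Proposition~\ref{prop6}.

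For the inviscid-limit formula (\ref{eqA2_2bstin}), I would pass $N\to\infty$ in (\ref{eqA2_2bst}). Since the indices $n_1,\ldots,n_c$ are fixed while $N\to\infty$, the convergence $\Psi^{(N-n_j+1)}\to\Psi^\infty$ in the sense of Definition~\ref{st_lim} follows from the hypothesis of Theorem~\ref{th3ss}. The main obstacle is to justify the interchange of limit and kernel composition in the finite chain appearing on the right-hand side of (\ref{eqA2_2bst}). This is exactly the issue addressed in \cite{karr1975weak}: weak convergence of Markov kernels in the sense of Definition~\ref{st_lim} is stable under finite composition, since for any bounded uniformly continuous $\varphi$ the function $a\mapsto\int\varphi(u)\,\Psi^{(N-n_c+1)}(du|a)$ is itself bounded and uniformly continuous (because $\Psi^{(N-n_c+1)}\to\Psi^\infty$ converges uniformly on compact sets of $a$), and can thus be inserted as the test function for the preceding factor. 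Composition with the deterministic Dirac kernels $\Sigma_+^{k}$ preserves this convergence trivially. Iterating over the finite chain of $c$ kernel compositions reduces (\ref{eqA2_2bst}) to (\ref{eqA2_2bstin}) in the limit, completing the proof.
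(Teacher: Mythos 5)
Your proposal is correct and follows essentially the same route as the paper, which justifies Proposition~\ref{prop8stat} only by the one-line remark that the deterministic derivation of Proposition~\ref{prop8} carries over verbatim once maps are replaced by kernels, compositions by kernel compositions (\ref{eqSK3}), and sums by convolutions (\ref{eqSK5}). Your additional care about why the i.i.d.\ structure of the $x_m$ validates the Markov/composition step, and why the finite chain of kernel compositions is stable under the convergence of Definition~\ref{st_lim} (the point imported from \cite{karr1975weak}), supplies detail the paper leaves implicit but does not change the argument.
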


Proposition~\ref{prop8stat} with expressions (\ref{eqKn2}) and (\ref{eq41}) determine the statistically regularized and inviscid-limit solutions iteratively at all times. 
For example, at $t = 2+\tau_1+\tau_3  = 2.625$, one finds
	\begin{equation}
	\label{eqFM2st}
	{P}^{(N)}_t = \Psi^{(N-2)} \circ \Sigma_+^2 \circ \Psi^{(N)} 
	\circ \Phi^{(N)}_{b_1} \circ \Phi^{(N)}_{b_0},
 	\end{equation}
	\begin{equation}
	\label{eqFM2stI}
	{P}^\infty_t = \Psi^\infty \circ \Sigma_+^2 \circ \Psi^{\infty} 
	\circ \Phi^{\infty}_{b_1} \circ \Phi^{\infty}_{b_0}.
 	\end{equation}

\vspace{2mm}\noindent\textbf{Acknowledgments.} 
We thank Dmytro Bandak for useful discussions. A.A.M. acknowledges visiting support from the Banff International Research Station, the Simons Center for Geometry and Physics, CNPq grant 308721/2021-7, and FAPERJ grant E-26/201.054/2022. 


\bibliographystyle{plain}
\bibliography{refs}

\end{document}